\documentclass[12pt,draftclsnofoot,onecolumn]{IEEEtran}

\usepackage[T1]{fontenc}
\usepackage{amsfonts}
\usepackage{amsmath}
\usepackage{amsthm}
\usepackage{amssymb}
\usepackage{cite,graphicx,algorithm}
\usepackage{algorithmic}
\usepackage{color}
\usepackage{bm}

\DeclareMathOperator*{\argmax}{arg\,max}

%
\ifCLASSINFOpdf
\else
\fi
\theoremstyle{plain}
\newtheorem{theorem}{Theorem}

\newtheorem{proposition}{Proposition}

%

%
\interdisplaylinepenalty=2500


\hyphenation{op-tical net-works semi-conduc-tor}

\begin{document}
%
\title{Receiver Design for OTFS with\\ Fractionally Spaced Sampling Approach}
%
%
%

\author{Yao~Ge,~\IEEEmembership{Student Member,~IEEE,}
       Qinwen~Deng,~\IEEEmembership{Student Member,~IEEE,}\\
       P.~C.~Ching,~\IEEEmembership{Fellow,~IEEE,}
       and~Zhi~Ding,~\IEEEmembership{Fellow,~IEEE}
\thanks{Y. Ge and P. C. Ching are with the Department
of Electronic Engineering, The Chinese University of Hong Kong, Hong Kong SAR of China (e-mail: yaoge.gy.jay@hotmail.com; pcching@ee.cuhk.edu.hk).}
\thanks{Q. Deng and Z. Ding are with the Department of Electrical and Computer Engineering, University of California at Davis, Davis, CA 95616 USA (e-mail: mrdeng@ucdavis.edu; zding@ucdavis.edu).}}

%
%

\markboth{}%
{Shell \MakeLowercase{\textit{et al.}}: Bare Demo of IEEEtran.cls for IEEE Communications Society Journals}
%



\maketitle

\begin{abstract}
The recent emergence of orthogonal time frequency space (OTFS) modulation
as a novel PHY-layer mechanism is more suitable in high-mobility wireless communication scenarios than traditional orthogonal frequency division multiplexing (OFDM). Although multiple studies have analyzed OTFS performance using theoretical and ideal
baseband pulseshapes, a challenging and open problem is the
development of effective receivers for practical OTFS systems
that must rely on non-ideal pulseshapes for transmission.
This work focuses on the design of practical receivers for OTFS.
We consider a fractionally spaced sampling (FSS) receiver in which
the sampling rate is an integer multiple of the symbol rate.
For rectangular pulses used in OTFS transmission, we derive a general channel input-output relationship of OTFS in delay-Doppler domain
without the common reliance on impractical assumptions such as ideal
bi-orthogonal pulses and on-the-grid delay/Doppler shifts.
We propose two equalization algorithms: iterative combining message passing (ICMP) and turbo message passing (TMP) for
symbol detection by exploiting delay-Doppler channel sparsity and
the channel diversity gain via FSS.
We analyze the convergence performance of TMP receiver
and propose simplified message passing (MP) receivers to further reduce
complexity. Our FSS receivers demonstrate stronger performance
than traditional receivers
and robustness to the imperfect channel state information knowledge.
\end{abstract}

\begin{IEEEkeywords}
Fractionally spaced sampling, Message passing, OTFS, Receiver design, Time-varying channels, Turbo equalization.
\end{IEEEkeywords}

%
\IEEEpeerreviewmaketitle

\section{Introduction}
%
%
%
%


In widespread development of wireless networks,
high-mobility applications such as high-speed trains and autonomous
vehicles pose new challenges due to the well-known
obstacle of time-varying channels with high Doppler spread.
Even though orthogonal frequency division multiplexing (OFDM) modulation
has achieved high spectral efficiency and throughput
for slow fading frequency selective channels,
its performance degrades significantly against faster
time-varying channels because of the loss of
orthogonality or inter-carrier-interference (ICI)
among OFDM subcarriers. One solution
is to shorten OFDM symbol duration so that the
channel appears quasi-stationary over each OFDM symbol
\cite{wang2006performance}, but at the cost of
lower spectral efficiency caused by the more significant cyclic prefix (CP).
Another approach is to mitigate ICI \cite{cai2003bounding,das2007max,zhao2001intercarrier}, which, however, is only effective for low or medium Doppler shifts and may incur some performance loss.
In addition, \cite{dean2017new} proposes a frequency-domain multiplexing with frequency-domain cyclic prefix (FDM-FDCP) scheme, which can efficiently tackle the Doppler spread but cannot handle the multipath delay effect resulting in inter-symbol interference (ISI).

Recently, orthogonal time frequency space (OTFS) has emerged \cite{hadani2017orthogonal} as a promising PHY-layer
modulation for high-mobility scenarios.
OTFS can exploit the degrees of freedom in both the delay and
Doppler dimensions of a mobile wireless channel,
resulting in superior performance compared with OFDM.
A number of studies on OTFS have been published for
multiple-input multiple-output (MIMO) system \cite{ramachandran2018mimo}, multiple access system \cite{khammammetti2018otfs,ding2019otfs} and
for radar \cite{raviteja2019orthogonal,gaudio2019performance}.
Works by
\cite{raviteja2019otfs} and \cite{surabhi2019diversity}\cite{raviteja2019effective} analyzed the diversity gain of OTFS system in static multipath channels and doubly dispersive channels, respectively. Furthermore,
the authors of \cite{surabhi2019peak} analyzed the
peak-to-average power ratio (PAPR) of OTFS whereas
the authors of
\cite{raviteja2018practical} studied the pulse shaping effect of OTFS.

Unlike OFDM,  OTFS multiplexes information symbols in the 2-dimensional (2D)
delay-Doppler domain instead of the time-frequency domain.
Thus, a channel that rapidly varies in time-frequency domain is
transformed into a near stationary channel in
the delay-Doppler domain.
This near stationary channel simplifies not only  the receiver design \cite{surabhi2019low,surabhi2019diversity,murali2018otfs} but also the
process of channel estimation \cite{raviteja2019embedded,shen2019channel,murali2018otfs} for OTFS systems in high-mobility scenarios.
However, most existing works \cite{murali2018otfs,hadani2017orthogonal,surabhi2019diversity,
ramachandran2018mimo,surabhi2019low,khammammetti2018otfs,ding2019otfs}
only consider the use of the ideal bi-orthogonal pulses that admit
a simple input-output channel relationship
in the delay-Doppler domain for efficient receiver design.
Unfortunately, such ideal pulses are not realizable
in practice due to the Heisenberg uncertainty principle
\cite{matz2013time}.
Alternatively, an OFDM-based OTFS system \cite{farhang2017low,rezazadehreyhani2018analysis,li2019low,long2019low}
may utilize the practical rectangular pulses.
However, this OFDM-based OTFS system will result in low spectral efficiency by inserting a CP in every OFDM symbol of each OTFS frame.

For better spectral efficiency, the works in
\cite{raviteja2018interference} and \cite{tiwari2019low} considered
the use of
rectangular pulses by inserting only one CP for the whole OTFS frame.
To this end,  low-complexity receivers designed in \cite{raviteja2018interference} and \cite{tiwari2019low}
can effectively eliminate self-interference and improve
receiver performance. The assumptions of
\cite{raviteja2018interference} and \cite{tiwari2019low} require
that delay and/or Doppler shifts land on the delay-Doppler
sampling grid which is determined a priori,
however, are still impractical in real OTFS deployment.

In this paper, we investigate more effective receiver algorithms
for OTFS modulation based on rectangular pulses with
a single CP for the entire OTFS frame
as described in\cite{raviteja2018interference}\cite{tiwari2019low}.
We note that existing receiver designs do not
fully utilize the spectral information by
applying restriction to symbol spaced sampling (SSS) for baseband
signal processing.
To preserve sufficient statistic of the OTFS channel output,
we shall apply fractionally spaced sampling (FSS) by sampling at
a rate that is multiple integer of the symbol rate.
Previous results \cite{ding2001blind} have shown that FSS of
signals with sufficient bandwidth can generate a single-input multiple-output (SIMO) channel model and exploit the
underlying channel diversity gain. Our work is motivated
by the fact that OFDM systems under FSS \cite{tepedelenlioglu2004low,wu2010oversampled} have already demonstrated
superior performance over their SSS counterparts.

We propose to use FSS receiver architecture
for OTFS system to achieve high diversity gain under high-mobility time-varying channels in our study. We consider the practical rectangular pulses
and efficiently apply only one CP for each OTFS frame. In addition, we drop the impractical assumption that delay or Doppler shifts
are on the grid and design two efficient receivers
to mitigate ISI in OTFS modulation.
Our contributions in this paper are as follows:
\begin{enumerate}
\item By utilizing the simple and
practical rectangular pulses at the transmitter and receiver,
we derive a general channel input-output relationship for OTFS
in the delay-Doppler domain without relying on the
assumptions such as ideal bi-orthogonal pulses which may not
even exist, or on-the-grid delay/Doppler shifts.
For such practical cases, ISI and extraneous phase
shifts become inevitable at the receiver. We develop novel
effective receiver algorithms to overcome these
practical challenges.

\item We design an OTFS receiver structure based on
FSS and develop two efficient receivers of moderate
complexity for symbol detection. Specifically, we propose an iterative combining
message passing (ICMP) receiver and turbo message passing (TMP)
receiver to exploit the delay-Doppler channel sparsity
and the channel diversity gain via FSS.

\item We analyze the performance and convergence of the proposed TMP receiver by using
extrinsic information transfer (EXIT) chart.
More importantly, we propose a simplified
message passing (MP) algorithm to further reduce the complexity by truncating
weak connection edges in a factor graph
without significant performance loss.
\item Our proposed FSS receivers for OTFS can achieve stronger
performance than the existing solutions.
Both ICMP and TMP receivers exhibit robustness
to uncertainty in channel state information (CSI).
\end{enumerate}

We organize the remainder of this paper as follows: Section
\ref{II_OTFS} introduces the fundamentals of OTFS. Section \ref{III_SSH} characterizes
the channel
input-output relationship of OTFS in the
delay-Doppler domain for non-ideal baseband
pulseshaping. In Section \ref{IV_FSS}, we
first describe the proposed OTFS receiver structure
based on FSS and propose two efficient receivers
for OTFS symbol detection. We further analyze the
performance of the proposed TMP receiver. Section \ref{S_MP}
proposes
a simplified MP algorithm to achieve
good complexity and performance trade-off.
Section \ref{simulation} provides simulation results
of the proposed receivers under the use of
practical baseband pulseshapes.
Finally, Section \ref{Conclusion} concludes our work.
Some detailed proofs appear in the Appendix of the paper.

\section{Fundamentals of OTFS}\label{II_OTFS}
This section briefly outlines basic OTFS concepts and
system model. We present the mathematical description
of conventional OTFS formulation.

\subsection{Basic Concepts of OTFS}
The discrete time-frequency signal plane
consists of time and frequency axes
with respective sampling interval of $T$ (seconds) and
$\Delta f{\rm{ = }}{1 \mathord{\left/
 {\vphantom {1 T}} \right.
 \kern-\nulldelimiterspace} T}$
(Hz), i.e., $$\Lambda  = \left\{ {(nT,m\Delta f),n = 0, \cdots ,N - 1;m = 0, \cdots ,M - 1} \right\}, \; N \in
{\cal Z}, \; M \in
{\cal Z}.
$$
Signals placed on time-frequency grids
denoted by $X[n,m]$, $n = 0, \cdots ,N - 1,m = 0, \cdots ,M - 1$
are transmitted over one OTFS frame with time
duration ${T_f} = NT$ and occupies a bandwidth $B = M\Delta f$.

The corresponding delay-Doppler plane consists
of the message-bearing grids $$\Gamma  = \left\{ {\left(\frac{k}{{NT}},\frac{\ell}{{M\Delta f}}\right),k = 0, \cdots ,N - 1;\ell = 0, \cdots ,M - 1} \right\},$$ where ${1 \mathord{\left/
{\vphantom {1 {M\Delta f}}} \right.
 \kern-\nulldelimiterspace} {M\Delta f}}$ and ${1 \mathord{\left/
 {\vphantom {1 {NT}}} \right.
 \kern-\nulldelimiterspace} {NT}}$ represent the quantization steps of the delay and Doppler frequency, respectively. The choices for $T$ and ${\Delta f}$ are determined by the channel characteristics, i.e., $T$ is not smaller than the maximal delay spread, and ${\Delta f}$ is not smaller than the largest Doppler shift.

At baseband, we can select transmit and receive pulses
${g_{tx}}(t)$ and ${g_{rx}}(t)$, respectively.
Let ${A_{{g_{rx}},{g_{tx}}}}(t,f)$ denotes the cross-ambiguity function between ${g_{tx}}(t)$ and ${g_{rx}}(t)$, i.e.,
\begin{align}
{A_{{g_{rx}},{g_{tx}}}}(t,f) \buildrel \Delta \over = \int {g_{rx}^*(t' - t){g_{tx}}(t'){e^{ - j2\pi f(t' - t)}}dt'}.
\end{align}
In order to fully
eliminate the cross-symbol interference at the receiver, ${g_{tx}}(t)$ and ${g_{rx}}(t)$ should satisfy the following \textbf{bi-orthogonal condition},
\begin{align}\label{Bi_Perfect}
{\left. {{A_{{g_{rx}},{g_{tx}}}}(t,f)} \right|_{t = nT,f = m\Delta f}}=\int {{e^{ - j2\pi m\Delta f(t - nT)}}g_{rx}^*(t - nT){g_{tx}}(t)dt}  = \delta [m]\delta [n].
\end{align}

\subsection{OTFS System Model}
\begin{figure}
  \centering
  \includegraphics[width=6in]{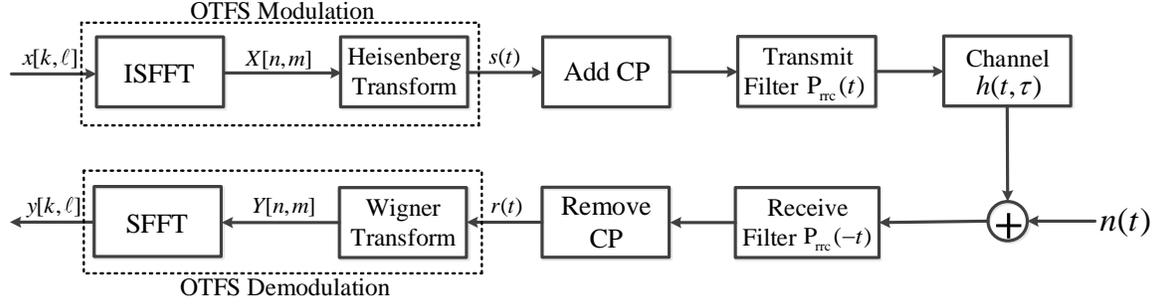}
  \caption{Block diagram of OTFS transmitter (top) and receiver (bottom).
  }\label{OTFS_system}
\end{figure}
The baseband diagram of OTFS system is
given in Fig. \ref{OTFS_system}. Specifically, OTFS modulation
starts with a cascade of a pair of 2D
transforms at the transmitter.
The modulator first maps the information symbols $x[k,\ell]$ in the delay-Doppler domain to $X[n,m]$ in time-frequency plane by
using the inverse symplectic finite Fourier transform (ISFFT).
Consider the $NM$ data
symbols $\left\{ {x[k,\ell],k = 0, \cdots ,N - 1;\ell = 0,
\cdots ,M - 1} \right\}$ from a modulation alphabet $\mathbb{A} = \left\{ {{a_1},{a_2}, \cdots ,{a_Q}} \right\}$ (e.g., QAM
symbols), which are placed on the delay-Doppler plane $\Gamma $.
By using the ISFFT, the $NM$ symbols are converted
into the time-frequency plane
$\Lambda $:
\begin{align}\label{transmit_X}
X[n,m] = \frac{1}{{\sqrt {NM} }}\sum\limits_{k = 0}^{N - 1} {\sum\limits_{\ell = 0}^{M - 1} {x[k,\ell]{e^{j2\pi \left( {\frac{{nk}}{N} - \frac{{m\ell}}{M}} \right)}}} }, \;
n = 0, \cdots ,N - 1;\; m = 0, \cdots, M - 1.
\end{align}
Next, time-frequency signals $X[n,m]$
are transformed into a time domain signal $s(t)$ through
Heisenberg transform utilizing transmit pulse ${g_{tx}}(t)$:
\begin{align}\label{transmit_s}
s(t) = \sum\limits_{n = 0}^{N - 1} {\sum\limits_{m = 0}^{M - 1} {X[n,m]{g_{tx}}(t - nT){e^{j2\pi \left( {m{\rm{ - }}\frac{{M{\rm{ - 1}}}}{{\rm{2}}}} \right)\Delta f(t - nT)}}} }.
\end{align}
We apply a CP of length at least equal to
the maximum baseband channel delay spread.
After inserting a CP in $s(t)$ to tackle inter-frame
interference\footnote{Note that an OFDM-based OTFS system
	proposed in \cite{farhang2017low,rezazadehreyhani2018analysis,li2019low,long2019low} inserts a CP to each of the $N$ OFDM symbols in an
OTFS frame, requiring $N$ CPs per OTFS frame.
Using one CP for each OTFS frame here
can considerably reduce the CP overhead.},
the time domain signal passes through the transmit filter
before entering the (baseband) channel with
baseband impulse response
\begin{align}\label{channel_mod}
h(t,p{T_s}) = \sum\limits_{i = 1}^L {{h_i}{e^{j2\pi {\nu _i}\left( {t - p{T_s}} \right)}}{{\mathop{\rm P}\nolimits} _\text{{rc}}}(p{T_s} - {\tau _i})}, \; p = 0, \cdots ,P - 1,
\end{align}
where $L$ is the number of multipaths and ${T_s} = {1 \mathord{\left/
 {\vphantom {1 {M\Delta f}}} \right.
 \kern-\nulldelimiterspace} {M\Delta f}}$ is the SSS interval;
${{h_i}}$, ${{\tau _i}}$ and ${{\nu _i}}$ represent
the gain, delay and Doppler shift associated with the $i$-th path, respectively.

Note that
${{{\mathop{\rm P}\nolimits} _\text{{rc}}}(p{T_s} - {\tau _i})}$ represents the sampled equivalent
filter response that
includes bandlimiting pulse-shaping
filters used by both transmitter and
receiver to control signal bandwidth and to reject out-of-band
interferences. Generally speaking, ${{{\mathop{\rm P}\nolimits} _\text{{rc}}}(\tau )}$ is a raised-cosine (RC) rolloff
pulse if the transmit filter response is a
root raised-cosine (RRC) rolloff pulse and the receive filter is its corresponding matched filter.
In addition, we denote the Doppler tap for the $i$-th path as ${\nu _i} = ({{k_{{\nu _i}}} + {\beta _{{\nu _i}}}})/NT$,
where integer ${{k_{{\nu _i}}}}$ represents the index of
Doppler frequency ${\nu _i}$ and
${\beta _{{\nu _i}}} \in \left( { -0.5,0.5} \right]$
is the fractional shift of the
nearest Doppler tap ${{k_{{\nu _i}}}}$.
The channel order $P$ is chosen according to the duration of the filter response and the maximum channel delay spread.

At the receiver, the received signal
enters a user-defined receive filter before CP removal.
The received signal $r(t)$ is given by
\begin{align}\label{receive_r}
r(t) = \sum\limits_{p = 0}^{P - 1} {\sum\limits_{i = 1}^L {{h_i}{e^{j2\pi {\nu _i}\left( {t - p{T_s}} \right)}}{{\mathop{\rm P}\nolimits} _\text{{rc}}}(p{T_s} - {\tau _i})} s(t - p{T_s})}  + N(t),
\end{align}
where the filtered noise is $N(t) = \int_\mu  {n(t + \mu ){{\mathop{\rm P}\nolimits} _\text{{rrc}}}(\mu )d\mu } $.
Note that ${{{\mathop{\rm P}\nolimits} _\text{{rrc}}}(\mu )}$
is typically an RRC rolloff receive
filter and ${n(t)}$ represents
the additive white Gaussian noise (AWGN) at the receiver.

The resulting time domain signal $r(t)$ is transformed back to the time-frequency domain through Wigner transform (i.e.,
inverse of Heisenberg transform).
The Wigner transform computes the cross-ambiguity function ${A_{{g_{rx}},r}}(t,f)$ given by
\begin{align}\label{receive_Y}
Y(t,f) = {A_{{g_{rx}},r}}(t,f) \buildrel \Delta \over = \int {g_{rx}^*(t' - t)r(t'){e^{ - j2\pi f(t' - t)}}dt'},
\end{align}
and the SSS baseband received signal
output is obtained by sampling $Y(t,f)$ as
\begin{align}
Y[n,m] = {\left. {Y(t,f)} \right|_{t = nT,f = \left( {m{\rm{ - }}\frac{{M{\rm{ - 1}}}}{{\rm{2}}}} \right)\Delta f}},
\; {n = 0, \cdots ,N - 1};\; {m = 0, \cdots ,M - 1}.
\end{align}

Finally, the symplectic finite Fourier transform (SFFT)
recovers the delay-Doppler domain data symbol
\begin{align}\label{relat_SFFT}
y[k,\ell] = \frac{1}{{\sqrt {NM} }}\sum\limits_{n = 0}^{N - 1} {\sum\limits_{m = 0}^{M - 1} {Y[n,m]{e^{ - j2\pi \left( {\frac{{nk}}{N} - \frac{{m\ell}}{M}} \right)}}} },
\; {k = 0, \cdots ,N - 1;\;\ell = 0, \cdots ,M - 1}.
\end{align}
These operations provide the basis of
OTFS model with SSS approach in a general case.
They are very useful to further study OTFS
system when the specific pulses are employed.

For analytical convenience,  we capture the relationship between
$X[n,m]$ and output $Y[n,m]$
in the following theorem.
\begin{theorem}\label{relation_TF_Theorem}
	The input-output relationship of OTFS in time-frequency domain is given by
	\begin{align}\label{relation_TF_G}
	Y[n,m] = \sum\limits_{n' = 0}^{N - 1} {\sum\limits_{m' = 0}^{M - 1} {{H_{n,m}}[n',m']X[n',m']} }  + V[n,m],
	\end{align}
	where $V[n,m]$ is the noise at the output of the Wigner transform and
	\begin{align}\label{TF_Channel}
	{H_{n,m}}[n',m'] &= \sum\limits_{p = 0}^{P - 1} {\sum\limits_{i = 1}^L {{h_i}{{\mathop{\rm P}\nolimits} _\text{{rc}}}(p{T_s} - {\tau _i}){A_{{g_{rx}},{g_{tx}}}}\left( {(n - n')T - p{T_s},(m - m')\Delta f - {\nu _i}} \right)} }\nonumber\\
	&\quad \times {e^{j\pi \left( {M - 1} \right)\Delta f(p{T_s} - (n - n')T)}}{e^{j2\pi m'\Delta f((n - n')T - p{T_s})}}{e^{j2\pi {\nu _i}\left( {nT - p{T_s}} \right)}}.
	\end{align}
\end{theorem}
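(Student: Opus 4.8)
The plan is to trace the transmit coefficients $X[n',m']$ through every stage of the chain --- Heisenberg transform, channel convolution, Wigner transform, and sampling --- and then to collect the accumulated phase into the claimed closed form. First I would substitute the Heisenberg transform \eqref{transmit_s} for $s(t')$ into the received signal \eqref{receive_r}, so that $r(t')$ becomes a double sum over the transmit indices $(n',m')$ weighted by $X[n',m']$, the path gains $h_i$, the sampled filter response $P_{\text{rc}}(pT_s-\tau_i)$, the Doppler exponential $e^{j2\pi\nu_i(t'-pT_s)}$, and the shifted-and-modulated pulse $g_{tx}(t'-pT_s-n'T)$. Feeding this into the Wigner transform \eqref{receive_Y} and interchanging the finite sums with the integral isolates, for each $(p,i,n',m')$, a single integral over $t'$ containing $g_{rx}^*(t'-t)$, $g_{tx}(t'-pT_s-n'T)$, and a product of linear-phase exponentials in $t'$.

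The crux is to recognize this inner integral as a shifted, frequency-offset cross-ambiguity function. I would make the substitution $u=t'-pT_s-n'T$, which aligns the argument of $g_{tx}$ with the integration variable and turns the argument of $g_{rx}^*$ into $u-\tau$ with delay $\tau=t-pT_s-n'T$. Gathering the $u$-dependent exponentials into a single factor $e^{-j2\pi\Omega u}$ identifies the Doppler argument $\Omega=f-\nu_i-(m'-\tfrac{M-1}{2})\Delta f$, and comparing with the definition of $A_{g_{rx},g_{tx}}$ shows the integral equals $e^{-j2\pi\Omega\tau}A_{g_{rx},g_{tx}}(\tau,\Omega)$ times the $u$-independent phase accumulated during the change of variable. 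Sampling at $t=nT$ and $f=(m-\tfrac{M-1}{2})\Delta f$ then yields exactly the cross-ambiguity argument $\bigl((n-n')T-pT_s,(m-m')\Delta f-\nu_i\bigr)$ appearing in \eqref{TF_Channel}, and defining $H_{n,m}[n',m']$ as the corresponding coefficient --- with $V[n,m]$ collecting the sampled Wigner transform of the filtered noise $N(t)$ --- establishes the linear relationship \eqref{relation_TF_G}.

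I expect the main obstacle to be the phase bookkeeping rather than any conceptual difficulty: the channel factor $e^{-j2\pi\nu_i pT_s}$, the term $e^{j2\pi\nu_i t'}$, the transmit modulation $e^{j2\pi(m'-\frac{M-1}{2})\Delta f(t'-pT_s-n'T)}$, the Wigner kernel $e^{-j2\pi f(t'-t)}$, and the factor $e^{-j2\pi\Omega\tau}$ produced by the change of variable must all be combined and regrouped after sampling. The key simplifications to carry out are that the pure-$\nu_i$ contributions collapse to $e^{j2\pi\nu_i(nT-pT_s)}$, and that the $\Delta f$ contributions separate cleanly into the data-bearing phase $e^{j2\pi m'\Delta f((n-n')T-pT_s)}$ and the residual $e^{j\pi(M-1)\Delta f(pT_s-(n-n')T)}$ arising from the $\tfrac{M-1}{2}$ frequency centering. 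Verifying that these three surviving factors reproduce \eqref{TF_Channel} term by term is the only delicate part of the argument.
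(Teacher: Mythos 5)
Your proposal is correct and follows essentially the same route as the paper's proof in Appendix~\ref{Appe_A}: substituting the Heisenberg transform into the channel output, passing through the Wigner transform, applying the change of variable $t''=t'-pT_s-n'T$ to expose the cross-ambiguity function, and regrouping the residual phases into $e^{j2\pi\nu_i(nT-pT_s)}$, $e^{j2\pi m'\Delta f((n-n')T-pT_s)}$, and $e^{j\pi(M-1)\Delta f(pT_s-(n-n')T)}$. The only cosmetic difference is that you perform the change of variable before sampling at $t=nT$, $f=(m-\tfrac{M-1}{2})\Delta f$, whereas the paper samples first; the phase bookkeeping you describe checks out exactly.
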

\begin{proof}
	See Appendix \ref{Appe_A}.
\end{proof}
We can also characterize the relationship
between channel output $y[k,\ell]$ and input $x[k,\ell]$
in the following theorem.
\begin{theorem}\label{relation_DD_Theorem}
The input-output relationship of OTFS in delay-Doppler domain is given by
\begin{align}\label{relation_DD_G}
y[k,\ell] = \frac{1}{{NM}}\sum\limits_{k' = 0}^{N - 1} {\sum\limits_{\ell' = 0}^{M - 1} {{h_{k,\ell}}[k',\ell']x[k',\ell']} }  + \upsilon [k,\ell],
\end{align}
where $\upsilon [k,\ell] = \frac{1}{{\sqrt {NM} }}\sum\limits_{n = 0}^{N - 1} {\sum\limits_{m = 0}^{M - 1} {V[n,m]{e^{ - j2\pi \left( {\frac{{nk}}{N} - \frac{{m\ell}}{M}} \right)}}} }$ and
\begin{align}\label{rela_h}
{h_{k,\ell}}[k',\ell'] = \sum\limits_{n = 0}^{N - 1} {\sum\limits_{m = 0}^{M - 1} {\sum\limits_{n' = 0}^{N - 1} {\sum\limits_{m' = 0}^{M - 1} {{H_{n,m}}[n',m']} } {e^{ - j2\pi \left( {\frac{{nk}}{N} - \frac{{m\ell}}{M}} \right)}}{e^{j2\pi \left( {\frac{{n'k'}}{N} - \frac{{m'\ell'}}{M}} \right)}}} }.
\end{align}
\end{theorem}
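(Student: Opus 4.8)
The plan is to obtain the delay-Doppler relation by composing the three linear maps already present in the model: the SFFT definition of $y[k,\ell]$ in \eqref{relat_SFFT}, the time-frequency input-output relation of Theorem \ref{relation_TF_Theorem}, and the ISFFT definition of $X[n',m']$ in \eqref{transmit_X}. Since all three are finite-dimensional linear transforms, the entire argument reduces to substitution followed by a reordering of finite sums.

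First I would start from \eqref{relat_SFFT} and substitute for $Y[n,m]$ the expression \eqref{relation_TF_G} supplied by Theorem \ref{relation_TF_Theorem}. This splits $y[k,\ell]$ into a signal term carrying $\sum_{n',m'} H_{n,m}[n',m'] X[n',m']$ and a noise term. The noise contribution is immediately $\upsilon[k,\ell] = \frac{1}{\sqrt{NM}}\sum_{n,m} V[n,m]\, e^{-j2\pi(nk/N - m\ell/M)}$, which matches the stated form verbatim, so no further work is needed there.

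Next I would replace each $X[n',m']$ in the signal term by its ISFFT representation from \eqref{transmit_X}, introducing the new summation indices $k'$ and $\ell'$ together with the kernel $e^{j2\pi(n'k'/N - m'\ell'/M)}$. The two normalization factors of $1/\sqrt{NM}$ (one from the SFFT, one from the ISFFT) then combine into the single $1/(NM)$ prefactor appearing in \eqref{relation_DD_G}. Interchanging the order of the resulting six finite summations over $n,m,n',m',k',\ell'$ so as to factor out $x[k',\ell']$, the leftover bracket is exactly the quadruple sum $\sum_{n,m,n',m'} H_{n,m}[n',m']\, e^{-j2\pi(nk/N - m\ell/M)}\, e^{j2\pi(n'k'/N - m'\ell'/M)}$, which is precisely the definition of $h_{k,\ell}[k',\ell']$ in \eqref{rela_h}. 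Collecting the signal and noise pieces yields \eqref{relation_DD_G}.

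Because every sum is finite, the interchange of summation order is automatic and there is no genuine analytic obstacle; the only point demanding care is the bookkeeping of the six indices and the verification that the two exponential kernels and the two $1/\sqrt{NM}$ factors assemble into exactly the claimed forms for $h_{k,\ell}[k',\ell']$ and the $1/(NM)$ prefactor. The statement is thus essentially a definitional consequence of chaining the ISFFT, the time-frequency channel of Theorem \ref{relation_TF_Theorem}, and the SFFT.
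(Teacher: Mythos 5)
Your proposal is correct and follows essentially the same route as the paper's Appendix~\ref{Appe_B}: substitute the time-frequency relation \eqref{relation_TF_G} into the SFFT \eqref{relat_SFFT}, expand $X[n',m']$ via the ISFFT \eqref{transmit_X}, merge the two $1/\sqrt{NM}$ factors, and reorder the finite sums to read off $h_{k,\ell}[k',\ell']$ as defined in \eqref{rela_h}. No gaps; this is exactly the paper's argument.
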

\begin{proof}
See Appendix \ref{Appe_B}.
\end{proof}

In the next section, we will consider a practical communication system, where the rectangular pulses are adopted by both the transmitter and receiver.

\section{OTFS Model Based on SSS for Rectangular Pulses}\label{III_SSH}

Recall that many existing works on OTFS \cite{hadani2017orthogonal,ramachandran2018mimo,khammammetti2018otfs,ding2019otfs,
	raviteja2019orthogonal,gaudio2019performance,raviteja2019otfs,surabhi2019diversity,
	raviteja2019effective,raviteja2018practical,surabhi2019low,murali2018otfs,
	raviteja2019embedded,raviteja2018interference,tiwari2019low} relied on certain impractical assumptions such as ideal
bi-orthogonal pulses and on-the-grid
delay and/or Doppler shifts.
In this section, we first prove that OTFS with rectangular pulses
at both transmitter and receiver, the bi-orthogonal condition in (\ref{Bi_Perfect}), can be satisfied. However, for time-varying
channels,
the ideal bi-orthogonal condition in (\ref{Bi_ideal})
does not hold.
We then derive a general input-output
relationship of OTFS system in delay-Doppler domain
for SSS.

Let ${\bar u}(t)$ denote the unit step function.
Without loss of generality, we consider rectangular pulses
$\mbox{rect}(t)= T^{-1/2} \cdot [{\bar u}(t)-{\bar u}(t-T)]$.
Given rectangular transmitter and receiver OTFS pulses
${g_{tx}}(t)=g_{rx}(t)=\mbox{rect}(t)$,
we have the following result:
\begin{proposition}
The rectangular pulses used by both the transmitter and receiver can satisfy the bi-orthogonal condition in (\ref{Bi_Perfect}), i.e.,
\begin{align}
\int {{e^{ - j2\pi m\Delta f(t - nT)}}g_{rx}^*(t - nT){g_{tx}}(t)dt}  = \delta [m]\delta [n].
\end{align}
\end{proposition}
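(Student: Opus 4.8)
The plan is to exploit the compact support of the rectangular pulse to collapse the $n$-dependence, and then to evaluate a single scalar integral to obtain the $m$-dependence. First I would substitute $g_{tx}(t)=g_{rx}(t)=\mbox{rect}(t)$ into the cross-ambiguity integral, so that the integrand becomes $T^{-1}[\bar u(t)-\bar u(t-T)][\bar u(t-nT)-\bar u(t-(n+1)T)]e^{-j2\pi m\Delta f(t-nT)}$. The two rectangular factors restrict the effective range of integration to the overlap of the intervals $[0,T)$ and $[nT,(n+1)T)$.

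The key observation is that for integer shifts these intervals tile the real line and are mutually disjoint, so their intersection is empty whenever $n\neq 0$ and equals the full interval $[0,T)$ when $n=0$. This immediately yields the factor $\delta[n]$: the integral vanishes for every nonzero integer $n$, regardless of the value of $m$.

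It then remains to treat the case $n=0$, where the product of the two rectangular windows equals $T^{-1}$ on $[0,T)$ and the phase reduces to $e^{-j2\pi m\Delta f\,t}$. Using $\Delta f=1/T$, this is the integral $T^{-1}\int_0^T e^{-j2\pi m t/T}\,dt$. For $m=0$ the integrand is constant and the integral equals $1$, whereas for any nonzero integer $m$ the antiderivative evaluates to a multiple of $e^{-j2\pi m}-1=0$ since $m$ is an integer. This produces the factor $\delta[m]$, and combining the two cases gives $\delta[m]\delta[n]$ as claimed.

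I do not anticipate a genuine obstacle here; the only points requiring care are the bookkeeping of the half-open supports, so that the overlap argument establishing the $\delta[n]$ factor is airtight, and the recognition that the orthogonality of the exponentials $\{e^{-j2\pi m t/T}\}$ over one period hinges precisely on the normalization $\Delta f\,T=1$ built into the OTFS grid.
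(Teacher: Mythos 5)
Your proposal is correct and follows essentially the same route as the paper's proof: a case split in which the disjoint supports of $\mbox{rect}(t)$ and $\mbox{rect}(t-nT)$ kill the integral for $n\neq 0$, and the evaluation of $T^{-1}\int_0^T e^{-j2\pi m\Delta f t}\,dt$ with $\Delta f = 1/T$ yields $\delta[m]$ for $n=0$. Your version simply spells out the support bookkeeping more explicitly than the paper does.
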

\begin{proof}
For $n \ne 0$, we clearly have $\int {{e^{ - j2\pi m\Delta f(t - nT)}}g_{rx}^*(t - nT){g_{tx}}(t)dt}  = 0$ due to the
finite time duration $T$ of the rectangular pulses $\mbox{rect}(t)$.
For $n=0$, we have
\begin{align*}
\int {{e^{ - j2\pi m\Delta f(t - nT)}}g_{rx}^*(t - nT){g_{tx}}(t)dt} = \frac{1}{T}\int_0^T {{e^{ - j2\pi m\Delta ft}}dt} = \delta[m].
\end{align*}
The proof is complete by combining both cases.
\end{proof}

However, when incorporating time-varying channel
(\ref{channel_mod}),  rectangular pulses cannot
guarantee the following \textbf{ideal bi-orthogonal condition}
\begin{align}\label{Bi_ideal}
A_{g_{rx},g_{tx}}(t,f)& = \delta [m]\delta[n]
q_{(-(P - 1){T_s},0)}(t)
q_{( - \nu _{\max }, \nu _{\max })} (f),  \begin{array}{l}
 t = \left( nT-(P - 1){T_s},nT \right),
\\f = (-\nu_{\max} + m\Delta f, \nu _{\max }+ m\Delta f),
\end{array}
\end{align}
where ${q_{(a,b)}}(x) = 1$ for $x \in (a,b)$ and $0$ otherwise.
This \textbf{ideal bi-orthogonal condition} in (\ref{Bi_ideal})
ensures that the ISI is eliminated at the
receiver in a practical communication system. However, an ideal pulses which satisfy the above \textbf{ideal bi-orthogonal
	condition} cannot be realized in practice due to
Heisenberg uncertainty principle \cite{matz2013time}. Thus,
the ISI is inevitable
at receiver input such that receiver equalization
is necessary for satisfactory reception performance.

Considering the rectangular pulses and the CP effect, we can rewrite ${{H_{n,m}}[n',m']}$ in (\ref{TF_Channel}) as
\begin{align}\label{relat_H_R_CP}
{H_{n,m}}[n',m'] &= \sum\limits_{p = 0}^{P - 1} {\sum\limits_{i = 1}^L {{h_i}{{\mathop{\rm P}\nolimits} _\text{{rc}}}(p{T_s} - {\tau _i}){A_{{g_{rx}},{g_{tx}}}}\left( {{{\left[ {n - n'} \right]}_N}T - p{T_s},(m - m')\Delta f - {\nu _i}} \right)} } \nonumber\\
&\quad \times {e^{j\pi \left( {M - 1} \right)\Delta f\left( {p{T_s} - {{\left[ {n - n'} \right]}_N}T} \right)}}{e^{j2\pi m'\Delta f\left( {{{\left[ {n - n'} \right]}_N}T - p{T_s}} \right)}}{e^{j2\pi {\nu _i}\left( {nT - p{T_s}} \right)}},
\end{align}
where the cross-ambiguity function
${{A_{{g_{rx}},{g_{tx}}}}\left( {{{\left[ {n - n'} \right]}_N}T - p{T_s},(m - m')\Delta f - {\nu _i}} \right)}$ is non-zero for $p = 0, \cdots ,P - 1$ and $\left| {{\nu _i}} \right| < {\nu _{\max }}$ only when ${\left[ {n - n'} \right]_N} \le 1$,
 i.e., $n' = n$ and $n' = {\left[ {n - 1} \right]_N}$. Hence, the time-frequency relationship
in (\ref{relation_TF_G}) reduces to
\begin{align}
Y[n,m] &= {H_{n,m}}[n,m]X[n,m] + \sum\limits_{m' = 0,m' \ne m}^{M - 1} {{H_{n,m}}[n,m']X[n,m']}  \nonumber\\
& \quad + \sum\limits_{m' = 0}^{M - 1} {{H_{n,m}}\left[ {{{\left[ {n - 1} \right]}_N},m'} \right]X\left[ {{{\left[ {n - 1} \right]}_N},m'} \right]}  + V[n,m],
\end{align}
in which the first term contains the desired signal,
the second and third terms represent ICI and ISI,
respectively. The following theorem summarizes the findings:
\begin{theorem}\label{relation_DDR_Theorem}
The OTFS input-output relationship in delay-Doppler domain
with rectangular pulses is given by
\begin{align}\label{relation_DD_R}
y[k,\ell] = \sum\limits_{p = 0}^{P - 1} {\sum\limits_{i = 1}^L {\sum\limits_{q = 0}^{N - 1} {{h_i}{{\mathop{\rm P}\nolimits} _\text{{rc}}}(p{T_s} - {\tau _i})\gamma (k,\ell,p,q,{k_{{\nu _i}}},{\beta _{{\nu _i}}})x\left[ {{{\left[ {k - {k_{{\nu _i}}} + q} \right]}_N},{{\left[ {\ell - p} \right]}_M}} \right]} } }  + \upsilon [k,\ell],
\end{align}
where
\begin{subequations}
\begin{equation}\label{gamma_off}
\gamma (k,\ell,p,q,{k_{{\nu _i}}},{\beta _{{\nu _i}}}) =
\begin{cases}
\frac{1}{N}\xi (\ell,p,{k_{{\nu _i}}},{\beta _{{\nu _i}}})\theta (q,{\beta _{{\nu _i}}}),&p \le \ell < M,\\
\frac{1}{N}\xi (\ell,p,{k_{{\nu _i}}},{\beta _{{\nu _i}}})\theta (q,{\beta _{{\nu _i}}})\phi (k,q,{k_{{\nu _i}}}), &0 \le \ell < p,
\end{cases}
\end{equation}
\begin{align}\label{Phase_off}
\xi (\ell,p,{k_{{\nu _i}}},{\beta _{{\nu _i}}}) = {e^{j\pi \frac{{M - 1}}{M}p}}{e^{j2\pi \left( {\frac{{\ell - p}}{M}} \right)\left( {\frac{{{k_{{\nu _i}}} + {\beta _{{\nu _i}}}}}{N}} \right)}},
\end{align}
\begin{align}\label{Theat_off}
\theta (q,{\beta _{{\nu _i}}}) = \frac{{{e^{ - j2\pi ( - q - {\beta _{{\nu _i}}})}} - 1}}{{{e^{ - j\frac{{2\pi }}{N}( - q - {\beta _{{\nu _i}}})}} - 1}},
\end{align}
\begin{align}\label{Phi_off}
\phi (k,q,{k_{{\nu _i}}}) = {e^{ - j\pi \left( {M - 1} \right)}}{e^{ - j2\pi \frac{{{{\left[ {k - {k_{{\nu _i}}} + q} \right]}_N}}}{N}}}.
\end{align}
\end{subequations}
\end{theorem}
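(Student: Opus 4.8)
The plan is to start from the reduced time--frequency relation derived immediately above the theorem statement (the one that retains only the $n'=n$ and $n'=[n-1]_N$ contributions) and push it through the SFFT definitions (\ref{relation_DD_G}) and (\ref{rela_h}), while specializing the cross-ambiguity function appearing in (\ref{relat_H_R_CP}) to the rectangular case $g_{tx}=g_{rx}=\mathrm{rect}$. Since $T_s=T/M$, the only surviving time-shift arguments of $A_{g_{rx},g_{tx}}$ are $t=-pT_s$ (from $n'=n$, where the two rectangular windows overlap on $[0,T-pT_s)$) and $t=T-pT_s$ (from $n'=[n-1]_N$, where they overlap on $[T-pT_s,T)$). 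First I would record the two resulting closed forms, namely the truncated exponential integrals
\begin{align}
A_{g_{rx},g_{tx}}(-pT_s,f)=\frac{1}{T}\int_{pT_s}^{T}e^{-j2\pi f u}\,du,\qquad
A_{g_{rx},g_{tx}}(T-pT_s,f)=\frac{1}{T}\int_{0}^{pT_s}e^{-j2\pi f u}\,du,
\end{align}
evaluated at $f=(m-m')\Delta f-\nu_i$. These two pieces are the analytic origin of the two branches of $\gamma$ in (\ref{gamma_off}).

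Next I would carry out the frequency summations over $m$ and $m'$. Together with the SFFT kernels $e^{j2\pi m\ell/M}$ and $e^{-j2\pi m'\ell'/M}$ and the residual phase $e^{j2\pi m'\Delta f([n-n']_N T-pT_s)}$, the $f$-dependence of the cross-ambiguity turns the double frequency sum into geometric series in $m$ and $m'$. Evaluating them samples the overlap integral on the delay grid and collapses the double delay sum onto the single tap $\ell'=[\ell-p]_M$; the leftover $\nu_i$-dependence of $f$ produces precisely the Doppler-shift phase $e^{j2\pi\frac{\ell-p}{M}\cdot\frac{k_{\nu_i}+\beta_{\nu_i}}{N}}$ of $\xi$ in (\ref{Phase_off}), while the frequency-centering factor $e^{j\pi(M-1)\Delta f(pT_s-[n-n']_N T)}$ supplies the $e^{j\pi\frac{M-1}{M}p}$ factor (using $\Delta f\,T_s=1/M$). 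I would then handle the sum over $n$: the Doppler phase $e^{j2\pi\nu_i(nT-pT_s)}$ with $\nu_i=(k_{\nu_i}+\beta_{\nu_i})/NT$, combined with $e^{-j2\pi nk/N}$ and $e^{j2\pi n'k'/N}$, yields a geometric series in $n$ whose value, after setting $k'=[k-k_{\nu_i}+q]_N$ so that $k'-k+k_{\nu_i}\equiv q\pmod N$, equals exactly $\theta(q,\beta_{\nu_i})$ in (\ref{Theat_off}); summing over the residue $q$ is what captures the fractional-Doppler spread.

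It remains to separate the two cases and identify $\phi$. For $p\le\ell<M$ the delay does not wrap, so $\ell'=\ell-p$ is supplied solely by the $n'=n$ branch, giving $\gamma=\tfrac1N\xi\theta$. For $0\le\ell<p$ the delay wraps ($\ell'=\ell-p+M$) and the contribution is carried by the ISI branch $n'=[n-1]_N$: its unit index shift in the $n$-sum contributes an extra $e^{-j2\pi k'/N}=e^{-j2\pi[k-k_{\nu_i}+q]_N/N}$, and the additional full symbol-time shift $T$ inside the centering factor contributes $e^{-j\pi(M-1)}$ (since $\Delta f\,T=1$); these two together are exactly $\phi(k,q,k_{\nu_i})$ of (\ref{Phi_off}), giving $\gamma=\tfrac1N\xi\theta\phi$. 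Collecting the filter weights $h_i\,\mathrm P_{\mathrm{rc}}(pT_s-\tau_i)$ and the overall normalization (the $1/(NM)$ of (\ref{relation_DD_G}) combining with the $M$-fold delay collapse to leave the stated $1/N$ in $\gamma$) then assembles (\ref{relation_DD_R}). The main obstacle I anticipate is the frequency-sum collapse in the second step: one must show that the geometric series in $m,m'$ genuinely sifts the continuous overlap integral down to the discrete tap $\ell'=[\ell-p]_M$ while cleanly splitting the two overlap regimes, and that the CP-induced wrapping is accounted for so that the $\phi$ correction arises only for $\ell<p$; the Doppler $\theta$-sum, by contrast, is a routine finite geometric series.
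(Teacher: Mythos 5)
Your proposal is correct and follows essentially the same route as the paper's Appendix C: split the reduced time--frequency relation into the $n'=n$ and $n'=[n-1]_N$ branches, evaluate the rectangular cross-ambiguity function over the two overlap regions, use the $m,m'$ DFT sums to collapse the delay index onto $\ell'=[\ell-p]_M$ (the paper's $F_c,F_s$), evaluate the $n$-sum as the geometric series $\theta$ (the paper's $G_c,G_s$ after the substitution $k'=[k-k_{\nu_i}+q]_N$), and read off $\phi$ from the extra phases of the ISI branch. The one obstacle you flag---whether the $m,m'$ sums cleanly sift the continuous overlap integral---is handled in the paper by first replacing the cross-ambiguity integral with its sampled (rate-$T_s$) version, after which the frequency sums yield exact Kronecker deltas $\delta\left([c+p-\ell]_M\right)\delta\left([c-\ell']_M\right)$ rather than Dirichlet kernels.
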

\begin{proof}
See Appendix \ref{Appe_C}.
\end{proof}

Note that the magnitude of $\theta (q,{\beta _{{\nu _i}}})$ in (\ref{Theat_off})
peaks at $q = 0$ and decreases rapidly as $|q|$ grows.
Hence, we can only consider a small number $2{E_i} + 1$ (${E_i} \ge 0$) of significant values $\theta (q,{\beta _{{\nu _i}}})$ in (\ref{Theat_off}), i.e., $ - {E_i} \le q \le {E_i}$. By using this approximation, we can
conveniently rewrite the received signal $y[k,\ell]$ in (\ref{relation_DD_R}):
\begin{align}
y[k,\ell] \approx \sum\limits_{p = 0}^{P - 1} {\sum\limits_{i = 1}^L {\sum\limits_{q =  - {E_i}}^{{E_i}} {{h_i}{{\mathop{\rm P}\nolimits} _\text{{rc}}}(p{T_s} - {\tau _i})\gamma (k,\ell,p,q,{k_{{\nu _i}}},{\beta _{{\nu _i}}})x\left[ {{{\left[ {k - {k_{{\nu _i}}} + q} \right]}_N},{{\left[ {\ell - p} \right]}_M}} \right]} } }  + \upsilon [k,\ell].
\end{align}
In addition, the relationship of (\ref{relation_DD_R}) can be simplified as follows if Doppler shifts are exactly on the grid
such that ${\beta _{{\nu _i}}}{\rm{ = 0,}}\;\forall i$ without
fractional Doppler shift:

\begin{proposition}\label{relation_DDRS_Theorem}
For Doppler shifts exactly on the grid, the relationship of
(\ref{relation_DD_R}) reduces to
\begin{align}\label{relation_DD_R_on}
y[k,\ell] = \sum\limits_{p = 0}^{P - 1} {\sum\limits_{i = 1}^L {{h_i}{{\mathop{\rm P}\nolimits} _{rc}}(p{T_s} - {\tau _i})\gamma (k,\ell,p,{k_{{\nu _i}}})x\left[ {{{\left[ {k - {k_{{\nu _i}}}} \right]}_N},{{\left[ {\ell - p} \right]}_M}} \right]} }  + \upsilon [k,\ell],
\end{align}
where
\begin{equation}
\gamma (k,\ell,p,{k_{{\nu _i}}}) =
\begin{cases}
\xi (\ell,p,{k_{{\nu _i}}},0),&p \le \ell < M,\\
\xi (\ell,p,{k_{{\nu _i}}},0){e^{ - j\pi \left( {M - 1} \right)}}{e^{ - j2\pi \frac{{{{\left[ {k - {k_{{\nu _i}}}} \right]}_N}}}{N}}}, &0 \le \ell < p.
\end{cases}
\end{equation}
\end{proposition}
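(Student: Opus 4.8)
The plan is to start from the general delay-Doppler relationship (\ref{relation_DD_R}) and show that imposing $\beta_{\nu_i}=0$ forces the inner summation over $q$ to collapse onto its single $q=0$ term. The key observation is that, apart from the index of $x$ and the phase $\phi$ in (\ref{Phi_off}), the variable $q$ enters the coefficient $\gamma$ of (\ref{gamma_off}) only through $\theta(q,\beta_{\nu_i})$ in (\ref{Theat_off}). Hence the entire argument reduces to analysing $\theta(q,0)$ for integer $q$ ranging over $0,\dots,N-1$.

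First I would recognize $\theta(q,\beta_{\nu_i})$ as a finite geometric sum. Writing $z = e^{j\frac{2\pi}{N}(q+\beta_{\nu_i})}$, the denominator of (\ref{Theat_off}) is $z-1$ and the numerator is $z^N-1$, so $\theta(q,\beta_{\nu_i}) = (z^N-1)/(z-1) = \sum_{n=0}^{N-1} z^n$, which is exactly the partial-sum form arising in the derivation of Appendix \ref{Appe_C}. Setting $\beta_{\nu_i}=0$ gives $\theta(q,0) = \sum_{n=0}^{N-1} e^{j2\pi q n/N}$, a sum of powers of an $N$-th root of unity. For integer $q$ this equals $N$ when $q\equiv 0 \pmod N$ and $0$ otherwise; restricted to $q\in\{0,\dots,N-1\}$ it is precisely $\theta(q,0) = N\,\delta[q]$.

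Next I would substitute this Kronecker-delta behaviour back into (\ref{relation_DD_R}). The delta annihilates every term except $q=0$, and the surviving prefactor $\tfrac{1}{N}\theta(0,0) = \tfrac{1}{N}\cdot N = 1$ cancels the leading $1/N$ in (\ref{gamma_off}). It then remains to evaluate the coefficient at $q=0$: in the regime $p\le\ell<M$ this leaves $\gamma = \xi(\ell,p,k_{\nu_i},0)$ with $\xi$ from (\ref{Phase_off}), while in the regime $0\le\ell<p$ the extra factor $\phi(k,0,k_{\nu_i}) = e^{-j\pi(M-1)}e^{-j2\pi[k-k_{\nu_i}]_N/N}$ persists, reproducing the two cases claimed. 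Simultaneously the first index of $x$ collapses from $[k-k_{\nu_i}+q]_N$ to $[k-k_{\nu_i}]_N$, yielding (\ref{relation_DD_R_on}).

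The one delicate point is the indeterminate $0/0$ form taken by the closed expression (\ref{Theat_off}) at $q=0,\beta_{\nu_i}=0$, where both numerator and denominator vanish. The cleanest way around this is to avoid the closed form altogether and read off $\theta(0,0)=N$ directly from the geometric-sum representation $\sum_{n=0}^{N-1} z^n$ above (equivalently, a single application of L'H\^opital's rule on the ratio confirms the same value). Once $\theta(q,0)=N\,\delta[q]$ is established, the remainder is routine bookkeeping of the phase terms $\xi$ and $\phi$, so I expect the root-of-unity evaluation to be the only substantive step.
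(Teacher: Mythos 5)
Your proposal is correct and takes essentially the same route as the paper: the paper's proof likewise rewrites $\theta(q,0)$ as the geometric sum $\sum_{n=0}^{N-1} e^{j\frac{2\pi}{N}nq} = N\,\delta\bigl[[q]_N\bigr]$, so that only the $q=0$ term of (\ref{relation_DD_R}) survives and (\ref{relation_DD_R_on}) follows. Your explicit resolution of the $0/0$ form of (\ref{Theat_off}) and the $q=0$ phase bookkeeping merely spell out steps the paper compresses into ``follows directly.''
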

\begin{proof}
The proof follows directly by noting from (\ref{Theat_off})
that
\begin{equation}\label{Theat_on}
\theta (q,0) = \sum_{n=0}^{N-1} e^{j\frac{2\pi}{N} n q } =
\begin{cases}
N,& [q]_N=0,\\
0, &\text{otherwise}.
\end{cases}=N\delta\left[[q]_N\right].
\end{equation}
Here we defined $[q]_N$ as the remainder of $q$ dividing $N$.
Accordingly, the result in (\ref{relation_DD_R_on}) follows from (\ref{relation_DD_R}).
\end{proof}

From \textbf{Theorem \ref{relation_DDR_Theorem}}, we
observe that the ISI and extra phase shifts at the receiver can
affect the symbol detection when OTFS uses the
practical rectangular pulses in Heisenberg and Wigner
transforms. Even the simplified model for
 on-the-grid Doppler shifts in \textbf{Proposition \ref{relation_DDRS_Theorem}}, the
ISI is still present.
Therefore, simple and effective receiver must
be designed to recover the signal in such practical
and non-ideal OTFS setups.

\section{Receiver Design for OTFS with FSS}\label{IV_FSS}

Recalling from the literature such as \cite{forney1972maximum,ding2001blind} that
the use of RRC filter at the transmitter
and the matched receive filter would widen the bandwidth beyond the minimum required bandwidth of $1/2T_s$. Thus, symbol spaced sampling (SSS),
typically, would not preserve
sufficient statistic for signal recovery since it falls below Nyquist
sampling rate and could also lead to performance sensitivity at the sampling instants \cite{forney1972maximum,ding2001blind}.

To develop more effective and robust receiver algorithms,
we attempted to design a fractionally spaced sampling (FSS)
receiver for OTFS with rectangular pulses and RRC filters, which is expected to be able to  generate weakly-correlated noises, admit sufficient statistic
\cite{forney1972maximum,ding2001blind} and further improve the equalization performance. Note that our proposed receivers
can be generalized to the non-rectangular pulses in a straightforward manner following the
steps from Appendices A, B and C.

\subsection{Receiver Structure}

When sampling at a rate that is an integer multiple $G$ of
the symbol rate,
FSS receiver is equivalent to a SIMO linear system in which $G$ multiple parallel channels
have correlated noise \cite{ding2001blind}.
The SIMO channel responses depend on the time-varying
channel as well as transmit and receive filters.
For the receive filter output to be sampled at rate
$G/{{T_s}}$, we first  write its polyphase representation
after the removal of the CP as
\begin{align}
{r_g}[u] = \sum\limits_{p = 0}^{P - 1} {\sum\limits_{i = 1}^L {{h_i}{e^{j2\pi {\nu _i}\left( {uT_s - p{T_s}} \right)}}{{\mathop{\rm P}\nolimits} _\text{{rc}}}[g]} s(uT_s - p{T_s})}
+ {N_g}[u], \begin{array}{l}
 u = 0, \cdots, NM - 1,
\\g = 0, \cdots, G - 1,
\end{array}
\end{align}
where the $g$-th channel output sequence is ${r_g}[u]\stackrel{\triangle}{=} r(uT_s + {{g{T_s}} \mathord{\left/
 {\vphantom {{g{T_s}} G}} \right.
 \kern-\nulldelimiterspace} G})$ with additive noise
${N_g}[u] \stackrel{\triangle}{=}N(uT_s + {{g{T_s}} \mathord{\left/
{\vphantom {{g{T_s}} G}} \right.
 \kern-\nulldelimiterspace} G})$, and ${{\rm{P}}_\text{{rc}}}[g] \buildrel \Delta \over = {{\rm{P}}_\text{{rc}}}(p{T_s} + {{g{T_s}} \mathord{\left/
 {\vphantom {{g{T_s}} G}} \right.
 \kern-\nulldelimiterspace} G} - {\tau _i})$ for simplicity.
The resulting SIMO receiver structure diagram for OTFS system is given in Fig. \ref{Receiver_structure}.
\begin{figure}
  \centering
  \includegraphics[width=6in]{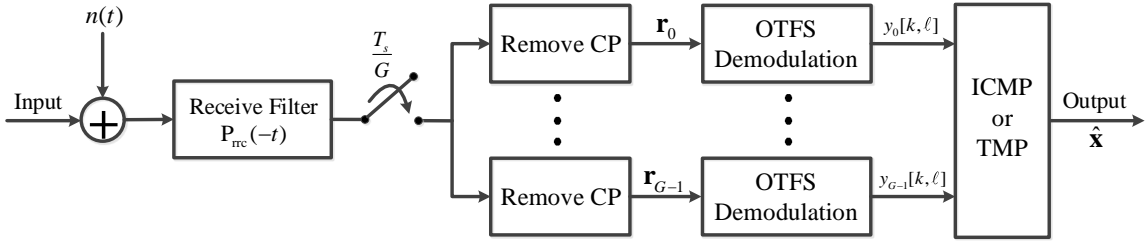}
  \caption{Receiver structure of FSS approach for OTFS system.}\label{Receiver_structure}
\end{figure}

To utilize the multiple receptions for diversity combining, we proceed with our FSS-OTFS system model. By performing an OTFS demodulation at the receiver for each $g$, we obtain the following relationship:
\begin{subequations}\label{FS_relation_DD}
\begin{align}
{y_g}[k,\ell]& = \sum\limits_{p = 0}^{P - 1} {\sum\limits_{i = 1}^L {\sum\limits_{q = 0}^{N - 1} {{h_i}{{\mathop{\rm P}\nolimits} _\text{{rc}}}[g]\gamma (k,\ell,p,q,{k_{{\nu _i}}},{\beta _{{\nu _i}}})x\left[ {{{\left[ {k - {k_{{\nu _i}}} + q} \right]}_N},{{\left[ {\ell - p} \right]}_M}} \right]} } }  + {\upsilon _g}[k,\ell]\\
 & \approx \sum\limits_{p = 0}^{P - 1} {\sum\limits_{i = 1}^L {\sum\limits_{q =  - {E_i}}^{{E_i}} {{h_i}{{\mathop{\rm P}\nolimits} _\text{{rc}}}[g]\gamma (k,\ell,p,q,{k_{{\nu _i}}},{\beta _{{\nu _i}}})x\left[ {{{\left[ {k - {k_{{\nu _i}}} + q} \right]}_N},{{\left[ {\ell - p} \right]}_M}} \right]} } }  + {\upsilon _g}[k,\ell],
\end{align}
\end{subequations}
where ${\upsilon _g}[k,\ell] = \frac{1}{{\sqrt {NM} }}\sum\limits_{n = 0}^{N - 1} {\sum\limits_{m = 0}^{M - 1} {{V_g}[n,m]} }{e^{ - j2\pi \left( {\frac{{nk}}{N} - \frac{{m\ell}}{M}} \right)}}$ with ${V_g}[n,m] = \int g_{rx}^*(t' - nT){N_g}(t')$ $\times {e^{ - j2\pi \left( {m{\rm{ - }}\frac{{M{\rm{ - 1}}}}{{\rm{2}}}} \right)\Delta f(t' - nT)}}dt'$.

The input-output relationship in (\ref{FS_relation_DD}) can be vectorized as
\begin{align}\label{FS_V}
{{\bf{y}}_g}  \simeq  {{\bf{H}}_g}{\bf{x}} + {{\bf{z}}_g},\;g = 0, \cdots ,G - 1,
\end{align}
where ${\bf{x}},{{\bf{y}}_g},{{\bf{z}}_g} \in {\mathbb{C}^{NM \times 1}}$ and ${{\bf{H}}_g} \in {\mathbb{C}^{NM \times NM}}$.
Because of the modulo-$N$ and modulo-$M$ operations in (\ref{FS_relation_DD}), the number of
non-zero elements in each row and column of ${{\bf{H}}_g}$
is identically $D$.
Typically, $D$ is much smaller than $NM$,
leading to a sparse matrix ${{\bf{H}}_g}$.

For the special case of SSS, the receiver is simplified with $G=1$.
One can derive an efficient MP algorithm
for symbol detection by rounding the delay and/or Doppler
shifts to integers on receiver sampling grid \cite{ramachandran2018mimo,raviteja2018interference}.

Because the RRC transmit and receive
bandlimiting filter has bandwidth between $1/2T_s$ and $1/T_s$.
Therefore, selecting $G=2$ as FSS
interval suffices to preserve the sufficient signal statistic.
Thus, we shall focus on the use of $G=2$ henceforth.
Unlike in \cite{ramachandran2018mimo} and \cite{raviteja2018interference}, our proposed
receivers can potentially exploit channel spectrum diversity gain
to substantially improve the performance without
relying on multiple antennas and multiple radio frequency (RF) chains.
We also drop the impractical assumption that delay/Doppler shifts
must be on the grid.
Note that extension to larger $G$, when broader bandwidth
becomes available, is straightforward since we can always
separate $G$ channels into two groups for receiver equalization.

\subsection{ICMP Receiver Equalization}
In this part, we will introduce the ICMP receiver
to take advantage of the SIMO
receptions. To this end, we combine the equations from the receptions in (\ref{FS_V}) as
\begin{align}\label{relation_vec}
{\bf{y}} = {\bf{Hx}} + {\bf{z}},
\end{align}
where ${\bf{y}} = {[{\bf{y}}_0^T,{\bf{y}}_1^T]^T}$, ${\bf{H}} = {[{\bf{H}}_0^T,{\bf{H}}_1^T]^T}$ and ${\bf{z}} = {[{\bf{z}}_0^T,{\bf{z}}_1^T]^T}$.
The basic problem now is to detect the transmitted symbol vector ${\bf{x}}$ from the received signals with channel knowledge at the receiver. Direct solution of (\ref{relation_vec})
could be computationally demanding as it involves
inversion of a large matrix as $NM$ can typically be in the
order of thousands.

Let $\mathcal{I}(d)$ and $\mathcal{J}(c)$
denote the sets of indexes with non-zero elements
in the $d$-th row and $c$-th column of ${\bf{H}}$,
where $d = 1, \cdots ,2NM$ and $c = 1, \cdots ,NM$,
respectively. Here, we can interpret the system model in (\ref{relation_vec}) as a sparsely-connected factor graph.
Thus, we can apply the concept of low-complexity MP algorithm for symbol detection.
Specifically, each entry of the observation vector ${\bf{y}}$
denotes an observation node, whereas each transmitted symbol
is viewed as a variable node.
In this factor graph,
each observation node $y[d]$ is connected to the set of $D$
variable nodes $\{ x[c],c \in \mathcal{I}(d)\} $
whereas each variable node $x[c]$ is connected to the set of $2D$ observation nodes $\{ y[d],d \in \mathcal{J}(c)\} $.

The optimal way of detecting the transmitted symbols is joint maximum a posterior probability (MAP) detection, i.e.,
	\begin{equation*}
	\mathbf{\hat{x}}=\argmax_{\mathbf{x}\in \mathbb{A}^{NM\times 1}}\Pr  \bigl(\mathbf{x|y,H}\bigr),
	\end{equation*}
which has a complexity exponential in $NM$. This can be intractable when the product $NM$ is in the order of
several thousands.
As a result, we derive a suboptimal symbol-by-symbol MAP detection
through following approximation:
	\begin{align}
	\hat{x}[c]&=\argmax_{a_j\in\mathbb{A}}\Pr \bigl(x[c]=a_j|\mathbf{y,H}\bigr)\notag\\
	&=\argmax_{a_j\in \mathbb{A}}\Pr (x[c] = {a_j})\Pr \bigl(\mathbf{y}|x[c]=a_j,\mathbf{H}\bigr)\notag\\
	&\approx \argmax_{a_j\in \mathbb{A}}{\omega _c}({a_j})\prod_{d\in \mathcal{J}(c)}{\Pr(y[d]\left|{x[c] = {a_j},{\bf{H}}} \right.)}\label{independ_y}
	\end{align}
In \eqref{independ_y}, we denote a priori probability
when $x[c]=a_j$ as ${\omega _c}({a_j})$ and assume the
components of $\mathbf{y}$ are approximately independent for
a given $x[c]$ due to the sparsity of $\mathbf{H}$. To further
reduce
complexity, we employ the Gaussian approximation of the
interference components within the proposed ICMP receiver.
Similar to \cite{ramachandran2018mimo,raviteja2018interference}, this algorithm performs
message passing iteratively among observation nodes
and variable nodes according to the factor graph.
The ICMP receiver is summarized in \textbf{Algorithm \ref{alg:A}}.
Below are its detailed steps in iteration $\kappa $:

\begin{algorithm}
\caption{ICMP Receiver}
\label{alg:A}
\begin{algorithmic}
\STATE {Input: ${\bf{y}}$, ${\bf{H}}$, ${\omega _c}({a_j}) = {1 \mathord{\left/
 {\vphantom {1 Q}} \right.
 \kern-\nulldelimiterspace} Q}$, $c = 1, \cdots NM$, $j = 1, \cdots Q$ and $n_{iter}$.}
\STATE {Initialization: ${\bf{p}}_{c,d}^{(0)} = {{\bm{\omega }}_c}$, $c = 1, \cdots NM$, $d\in \mathcal{J}(c)$, $\eta^{(0)}=0$ and iteration count $\kappa=1$.}
\REPEAT
\STATE \begin{enumerate}
         \item Each observation node $y[d]$ computes the mean $\mu _{d,c}^{(\kappa)}$ and variance ${{(\sigma _{d,c}^{(\kappa)})}^2}$ in (\ref{5}) and (\ref{6}), then passes them to the connected variable nodes $x[c],c\in\mathcal{I}(d)$;
         \item Each variable node $x[c]$ generates ${\bf{p}}_{c,d}^{(\kappa)}$ in (\ref{7}) and passes them to the connected observation nodes $y[d],d\in \mathcal{J}(c)$;
         \item Compute the convergence indicator $\eta^{(\kappa)}$ and symbol probabilities ${\bf{p}}_c^{(\kappa)}$ in (\ref{9});
         \item Update decision symbol probabilities ${\bf{\bar p}}_c = {\bf{p}}_c^{(\kappa)}$ if $\eta^{(\kappa)}>\eta^{(\kappa-1)}$;
         \item $\kappa: = \kappa + 1$;
       \end{enumerate}
\UNTIL{$\eta^{(\kappa)}=1$ or $\kappa=n_{iter}$.}
\STATE {Output: The decisions of the transmitted symbols in (\ref{12}).}
\end{algorithmic}
\end{algorithm}

\textbf{From observation node $y[d]$ to variable nodes $x[c],c\in\mathcal{I}(d)$}: At each observation node, extrinsic messages to each connected variable node is computed according
to the channel model,
noisy channel observations,
and a priori information from other connected variable nodes.
The received signal $y[d]$ can be written as
	\begin{equation}
	y[d] = H[d,c]x[c] + \underbrace {\sum\limits_{e \in {\cal I}(d),e \ne c} H[d,e]x[e] + z[d]}_{\zeta _{d,c}^{(\kappa)}},\label{4}
	\end{equation}
where sum of interference and noise $\zeta_{d,c}^{(\kappa)}$ is approximately modeled as $\mathcal{CN}\left( {\mu _{d,c}^{(\kappa)},{{(\sigma _{d,c}^{(\kappa)})}^2}} \right)$
according to Central Limit Theorem \cite{brosamler1988almost},  with
	\begin{equation}
	\mu_{d,c}^{(\kappa)}=\sum_{e\in\mathcal{I}(d),e\neq c}\sum_{j=1}^Q p_{e,d}^{(\kappa-1)}(a_j)a_jH[d,e],\label{5}
	\end{equation}
	\begin{equation}
	{(\sigma _{d,c}^{(\kappa)})^2} = \sum\limits_{e \in {\cal I}(d),e \ne c} {\left( {\sum\limits_{j = 1}^Q {p_{e,d}^{(\kappa - 1)}} ({a_j}){{\left| {{a_j}} \right|}^{\rm{2}}}{{\left| {H[d,e]} \right|}^{\rm{2}}} - {{\left| {\sum\limits_{j = 1}^Q {p_{e,d}^{(\kappa - 1)}} ({a_j}){a_j}H[d,e]} \right|}^{\rm{2}}}} \right)}  + \sigma _N^2.\label{6}
	\end{equation}
In \eqref{6}, $\sigma _N^2 = \sigma _n^2\int_\mu  {{\mathop{\rm P}\nolimits} _\text{{rrc}}^2(\mu )d\mu }$ is the variance of the colored Gaussian noise after the receive filter and $\sigma_n^2$ is the variance of the AWGN $\bf{n}$ at the receiver input.
The mean $\mu _{d,c}^{(\kappa)}$ and variance ${{(\sigma _{d,c}^{(\kappa)})}^2}$ are used as messages passed from observation nodes to variable nodes.
	
\textbf{From variable node $x[c]$ to observation
nodes $y[d],d\in \mathcal{J}(c)$}: At each variable node,
the extrinsic information for each connected observation node is generated from prior messages collected from other observation
nodes. A posteriori log-likelihood ratio (LLR) is given by
\begin{align}
\alpha _{c}^{(\kappa)}({a_j})& = \log \frac{{{\omega _c}({a_j})\prod\limits_{e \in {\cal J}(c)} {\Pr (y[e]\left| {x[c] = {a_j},{\bf{H}}} \right.)} }}{{{\omega _c}({a_Q})\prod\limits_{e \in {\cal J}(c)} {\Pr (y[e]\left| {x[c] = {a_Q},{\bf{H}}} \right.)} }}\nonumber\\
& = \underbrace {\log \frac{{{\omega _c}({a_j})\prod\limits_{e \in {\cal J}(c),e \ne d} {\Pr (y[e]\left| {x[c] = {a_j},{\bf{H}}} \right.)} }}{{{\omega _c}({a_Q})\prod\limits_{e \in {\cal J}(c),e \ne d} {\Pr (y[e]\left| {x[c] = {a_Q},{\bf{H}}} \right.)} }}}_{\alpha _{c,d}^{(\kappa)}({a_j})} + \underbrace {\log \frac{{\Pr (y[d]\left| {x[c] = {a_j},{\bf{H}}} \right.)}}{{\Pr (y[d]\left| {x[c] = {a_Q},{\bf{H}}} \right.)}}}_{\Lambda _{c,d}^{(\kappa)}({a_j})},\nonumber
\end{align}
where $\Lambda _{c,d}^{(\kappa)}({a_j}) = \log \frac{{\varepsilon _{d,c}^{(\kappa)}({a_j})}}{{\varepsilon _{d,c}^{(\kappa)}({a_Q})}}$, the extrinsic LLR $\alpha _{c,d}^{(\kappa)}({a_j}) = \log \frac{{{\omega _c}({a_j})}}{{{\omega _c}({a_Q})}} + \sum\limits_{e \in {\cal J}(c),e \ne d} {\log \frac{{\varepsilon _{e,c}^{(\kappa)}({a_j})}}{{\varepsilon _{e,c}^{(\kappa)}({a_Q})}}}$ and $\varepsilon _{e,c}^{(\kappa)}({a_j}) = \exp \left( { - \frac{{{{\left| {y[e] - \mu _{e,c}^{(\kappa)} - {H_{e,c}}{a_j}} \right|}^2}}}{{{{(\sigma _{e,c}^{(\kappa)})}^2}}}} \right)$.
The message passed from a variable node $x[c]$ to observation nodes $y[d],d\in \mathcal{J}(c)$ is the probability mass function of the alphabet
	\begin{equation}
	p_{c,d}^{(\kappa)}(a_j)=\Delta \cdot \tilde{P}_{c,d}^{(\kappa)}(a_j)+(1-\Delta)\cdot p_{c,d}^{(\kappa-1)}(a_j),\ a_j\in \mathbb{A}, \label{7}
	\end{equation}	
where $\tilde P_{c,d}^{(\kappa)}({a_j}){\rm{ }} = \left[\sum\limits_{k = 1}^Q {\exp \left( {\alpha _{c,d}^{(\kappa)}({a_k})} \right)} \right]^{-1}
{\exp \left( {\alpha _{c,d}^{(\kappa)}({a_j})} \right)}
$ and $\Delta\in(0,1]$ is a \textit{message damping factor} used
to improve performance by controlling convergence speed \cite{som2011low,ramachandran2018mimo,raviteja2018interference}.
	
\textbf{Convergence indicator}: The convergence indicator $\eta^{(\kappa)}$ can be computed as
	\begin{equation}
	\eta^{(\kappa)}=\frac{1}{NM}\sum_{c=1}^{NM}\mathbb{I}\Biggl(\max_{a_j\in\mathbb{A}}\ p_c^{(\kappa)}(a_j)\geq 1-\varrho\Biggr) \label{9}
	\end{equation}
	for some small $\varrho >0$ and where $p_c^{(\kappa)}({a_j}) =\left[{\sum\limits_{k = 1}^Q {\exp \left( {\alpha _c^{(\kappa)}({a_k})} \right)} }\right]^{-1}
{{\exp \left( {\alpha _c^{(\kappa)}({a_j})} \right)}}$. $\mathbb{I}(\cdot)$ denotes the indicator function.
	
\textbf{Update criteria}: If $\eta^{(\kappa)}>\eta^{(\kappa-1)}$, then we update the probabilities of transmitted symbols as
	\begin{equation}
	{\bf{\bar p}}_c = {\bf{p}}_c^{(\kappa)},\ c=1,\dots,NM. \label{11}
	\end{equation}
Note that we only update the probabilities if the current iteration
is better than the previous one.

\textbf{Stopping criteria}: The MP algorithm stops if either $\eta^{(\kappa)}=1$ or the maximum number of iterations $n_{iter}$ is reached.

Once the stopping criteria is satisfied, we make the decisions of the transmitted symbols as
	\begin{equation}
	\hat{x}[c]=\argmax_{a_j\in \mathbb{A}}\bar p_c({a_j}),\ c=1,\dots,NM. \label{12}
	\end{equation}

Even though ICMP receiver can exploit the SIMO channel
diversity gain, the performance may degenerate
when the corresponding jointly factor graph of the parallel correlated channels is densely connected
to form short cycles \cite{kschischang2001factor}.
Unfortunately, OTFS often exhibits a high
density graph due to off-grid channel delays and Doppler shifts.
In addition, the performance can also suffer when Gaussian approximation of the interference terms becomes less accurate.
To overcome these shortcomings, we propose a more efficient TMP receiver in the next subsection.

\subsection{TMP Receiver Equalization}
Since there are two receive channels from (\ref{FS_V}),
an MP equalizer can be applied for each reception typically.
Given two MP equalizers,
we propose a turbo receiver to enable cooperation
between these two MP equalizers for better performance.
In the TMP receiver, two individual
MP equalizers exchange information in the form of LLRs
for each symbol. The extrinsic LLRs generated by one MP
equalizer are treated as a priori information
by the other. As soft information
is circulated via this algorithmic loop, more reliable
soft information produced in one equalizer helps the other
to improve.
Improved bit error rate (BER) performance can
be achieved as iterations continues and are
terminated
after a certain number $n_{t}$ of iterations.

\begin{figure}
  \centering
  \includegraphics[width=2.5in]{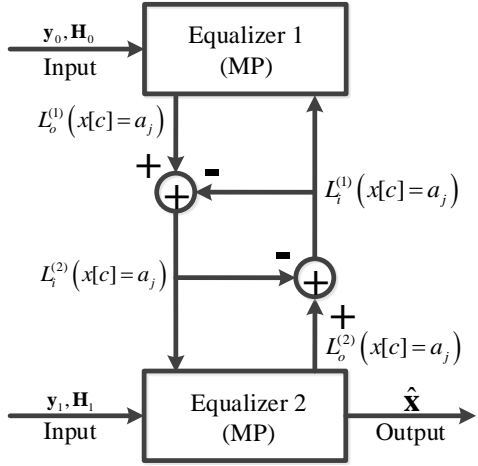}
  \caption{TMP receiver structure.}\label{TMP}
\end{figure}
The TMP receiver structure is shown in Fig. \ref{TMP}.
A similar MP algorithm as in \textbf{Algorithm \ref{alg:A}}
can be employed for each equalizer with only modest modifications of input ${{\bm{\omega }}_c}$ by a priori information and output the a posteriori information.

Specifically, the first equalizer produces output a posteriori LLR on each symbol as
\begin{align}
L_o^{(1)}\left( {x[c] = {a_j}} \right)
= \underbrace {\log \frac{{\Pr ({\bf{y}}|x[c] = {a_j},{\bf{H}})}}{{\Pr ({\bf{y}}|x[c] = {a_Q},{\bf{H}})}}}_{L_e^{(1)}\left( {x[c] = {a_j}} \right)} + \underbrace {\log \frac{{\Pr (x[c] = {a_j})}}{{\Pr (x[c] = {a_Q})}}}_{L_i^{(1)}\left( {x[c] = {a_j}} \right)},
\end{align}
where $c = 1, \cdots ,NM$ and $j = 1, \cdots ,Q$. The extrinsic LLR ${L_e^{(1)}\left( {x[c] = {a_j}} \right)}$ is then passed to the second equalizer as a priori LLR, i.e., $L_i^{(2)}\left( {x[c] = {a_j}} \right) = L_e^{(1)}\left( {x[c] = {a_j}} \right)$. Similarly, the second equalizer generate extrinsic LLR $L_e^{(2)}\left( {x[c] = {a_j}} \right) = L_o^{(2)}\left( {x[c] = {a_j}} \right) - L_i^{(2)}\left( {x[c] = {a_j}} \right)$, which is passed back to the first equalizer as a priori information to form the iterative loop.
Note that we only pass extrinsic information. Otherwise, messages become more and more correlated over the iterations and the efficiency of the iterative algorithm would be reduced, which results in performance loss \cite{douillard1995}.

\subsection{Performance Analysis of TMP Receiver}
Since our proposed FSS receiver can typically generate weakly-correlated channel outputs, the EXIT chart \cite{el2013exit,ten2001convergence} still offers substantially clear insight into the convergence behavior of our proposed TMP receiver. It has been successfully used for analyzing and predicting convergence behavior
of iteratively decoded systems \cite{el2013exit,ten2001convergence}.
In this subsection, we analyze the performance of the proposed TMP receiver by using the tool of EXIT chart \cite{el2013exit,ten2001convergence}, which
tracks the evolution of mutual
information (MI) between transmitted symbols and their
LLRs through iterations. For the EXIT chart of TMP receiver analysis, the two MP equalizers are modeled as the MI transfer devices, i.e., given a priori MI $I_i$ at the input, each equalizer generates
a new extrinsic MI $I_e$ at the output, where ${I_i} \buildrel \Delta \over = I({{\bf{L}}_i};x)$ and ${I_e} \buildrel \Delta \over = I({{\bf{L}}_e};x)$, respectively.

We consider quadrature phase shift keying (QPSK) with Gray mapping as an example, i.e., $\mathbb{A} = \left[ {\frac{{1 + j}}{{\sqrt 2 }},\frac{{1 - j}}{{\sqrt 2 }},\frac{{ - 1 + j}}{{\sqrt 2 }},\frac{{ - 1 - j}}{{\sqrt 2 }}} \right]$ and similar analysis could be done with other modulations. A Gray-mapped QPSK can be regarded as a superposition of the BPSK modulated in-phase and quadrature components, so the MI ${I_i} = I_i^{(\text{I})} + I_i^{(\text{Q})}$ with
\begin{align}\label{MI}
I_i^{(\text{S})} = \frac{1}{2}\sum\limits_{{x^{(\text{S})}} \in \left\{ {{1 \mathord{\left/
 {\vphantom {1 {\sqrt 2 }}} \right.
 \kern-\nulldelimiterspace} {\sqrt 2 }}, - {1 \mathord{\left/
 {\vphantom {1 {\sqrt 2 }}} \right.
 \kern-\nulldelimiterspace} {\sqrt 2 }}} \right\}} {\int_{ - \infty }^\infty  {{f_{L_i^{(\text{S})}}}({l^{(\text{S})}}\left| {{x^{(\text{S})}}} \right.){{\log }_2}\frac{{2{f_{L_i^{(\text{S})}}}({l^{(\text{S})}}\left| {{x^{(\text{S})}}} \right.)}}{{{f_{L_i^{(\text{S})}}}({l^{(\text{S})}}\left| {{1 \mathord{\left/
 {\vphantom {1 {\sqrt 2 }}} \right.
 \kern-\nulldelimiterspace} {\sqrt 2 }}} \right.) + {f_{L_i^{(\text{S})}}}({l^{(\text{S})}}\left| { - {1 \mathord{\left/
 {\vphantom {1 {\sqrt 2 }}} \right.
 \kern-\nulldelimiterspace} {\sqrt 2 }}} \right.)}}d{l^{(\text{S})}}} },
\end{align}
where $\text{S} \in \{ \text{I},\text{Q}\}$ and ${{f_{L_i^{(\text{S})}}}({l^{(\text{S})}}\left| {{x^{(\text{S})}}} \right.)}$ is the conditional distribution of a priori LLR ${L_i^{(\text{S})}}$ given ${x^{(\text{S})}} \in \left\{ {\frac{1}{{\sqrt 2 }}, - \frac{1}{{\sqrt 2 }}} \right\}$. When the Gaussian approximation is applied to ${{f_{L_i^{(\text{S})}}}({l^{(\text{S})}}\left| {{x^{(\text{S})}}} \right.)}$, i.e.,
\begin{align}
{f_{L_i^{(\text{S})}}}({l^{(\text{S})}}\left| {{x^{(\text{S})}}} \right.) = \frac{1}{{\sqrt {2\pi } {\sigma _{L_i^{(\text{S})}}}}}\exp \left( { - \frac{{{{\left({l^{(\text{S})}} - \sqrt 2 \sigma _{L_i^{(\text{S})}}^2{x^{(\text{S})}}\right)}^2}}}{{2\sigma _{L_i^{(\text{S})}}^2}}} \right),
\end{align}
where ${\sigma _{L_i^{(\text{S})}}^2}$ is the variance of the LLR random variables $L_i^{(\text{S})}$. The MI ${I_i}$ can be expressed as
\begin{align}
{I_i}({\sigma _{L_i^{(\text{S})}}})& = 2 - 2\int_{ - \infty }^\infty  {{f_{L_i^{(\text{S})}}}\left({l^{(\text{S})}}\left| {{1 \mathord{\left/
 {\vphantom {1 {\sqrt 2 }}} \right.
 \kern-\nulldelimiterspace} {\sqrt 2 }}} \right.\right){{\log }_2}\left(1 + {e^{ - 2{l^{(\text{S})}}}}\right)d{l^{(\text{S})}}}\nonumber\\
&= 2 - 2{E_{{x^{(\text{S})}} = {1 \mathord{\left/
 {\vphantom {1 {\sqrt 2 }}} \right.
 \kern-\nulldelimiterspace} {\sqrt 2 }}}}\left[{\log _2}\left(1 + {e^{ - 2L_i^{(\text{S})}}}\right)\right],
\end{align}
where ${E_x}[\bullet]$ is the expectation value of $x$. Note that
the function ${I_i}({\sigma _{L_i^{(\text{S})}}})$ is
monotonically increasing and has an inverse.
Additionally, ${\lim _{{\sigma _{L_i^{(\text{S})}}} \to 0}}{I_i}({\sigma _{L_i^{(\text{S})}}}) = 0$ and ${\lim _{{\sigma _{L_i^{(\text{S})}}} \to \infty }}{I_i}({\sigma _{L_i^{(\text{S})}}}) = 2$, which correspond to zero and perfect a priori
information, respectively.

After passing samples of ${{\bf{L}}_i}$ through the MP equalizer,
the output of extrinsic MI $I_e$ is obtained by applying the same expression in (\ref{MI}) with the distribution
of ${{\bf{L}}_e}$.
This can be achieved by first estimating the conditional
distribution of ${{f_{L_e^{(\text{S})}}}({l^{(\text{S})}}\left| {{x^{(\text{S})}}} \right.)}$ using the histogram method \cite{el2013exit,ten2001convergence} before computing  ${I_e} = I_e^{(\text{I})} + I_e^{(\text{Q})}$ numerically based
on (\ref{MI}). The EXIT chart is depicted by repeating the
procedure above for several values of
${\sigma _{L_i^{(\text{S})}}}$ to yield pairs of $({I_i},{I_e})$.

\begin{figure}
\begin{minipage}[t]{0.48\textwidth}
\centering
\includegraphics[width=3.3in]{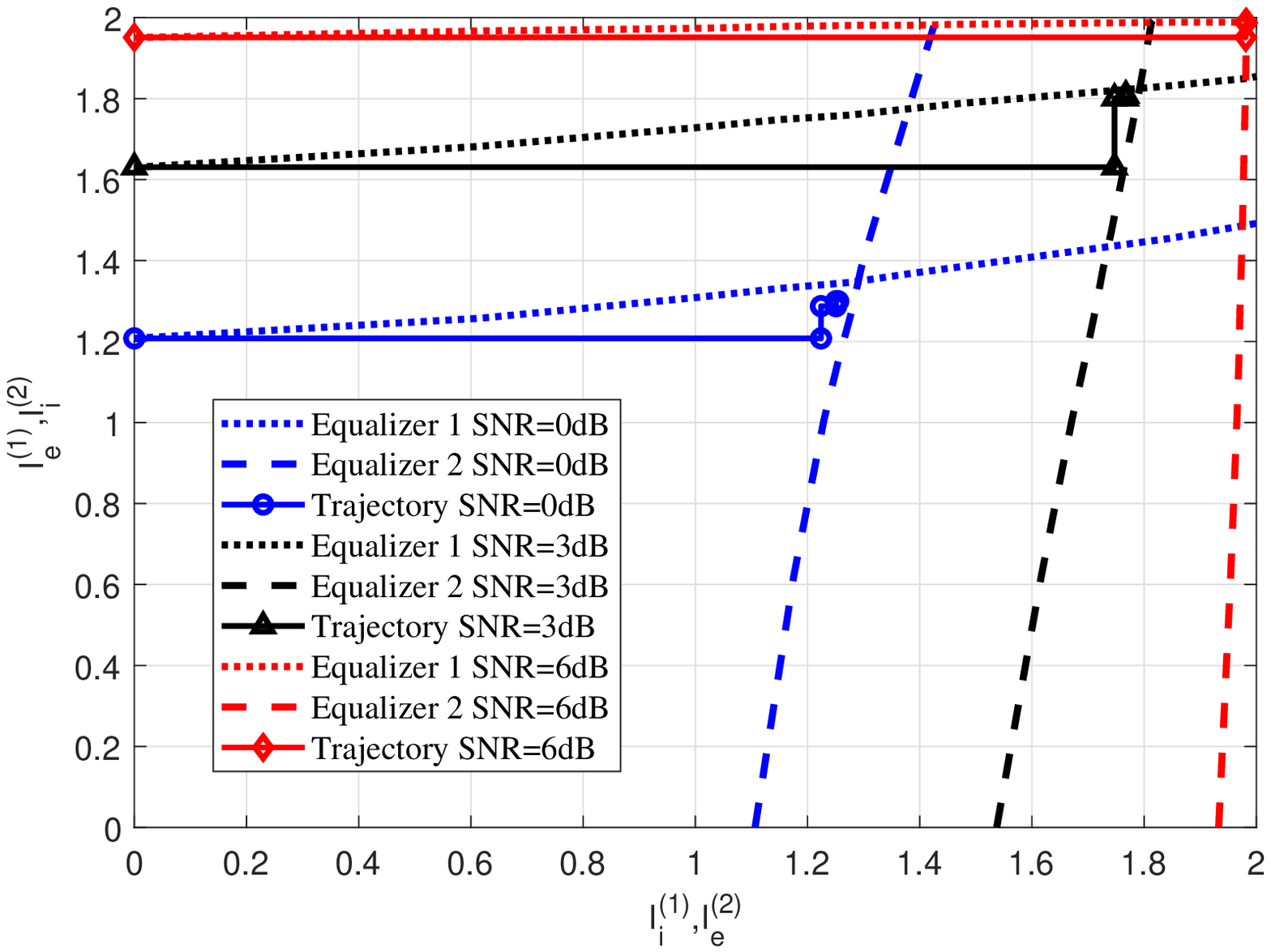}
\caption{EXIT charts and simulated traces of TMP receiver for QPSK with SNR = 0dB, 3dB and 6dB.}\label{EXIT}
\end{minipage}
\hfill
\begin{minipage}[t]{0.48\textwidth}
\centering
\includegraphics[width=3.3in]{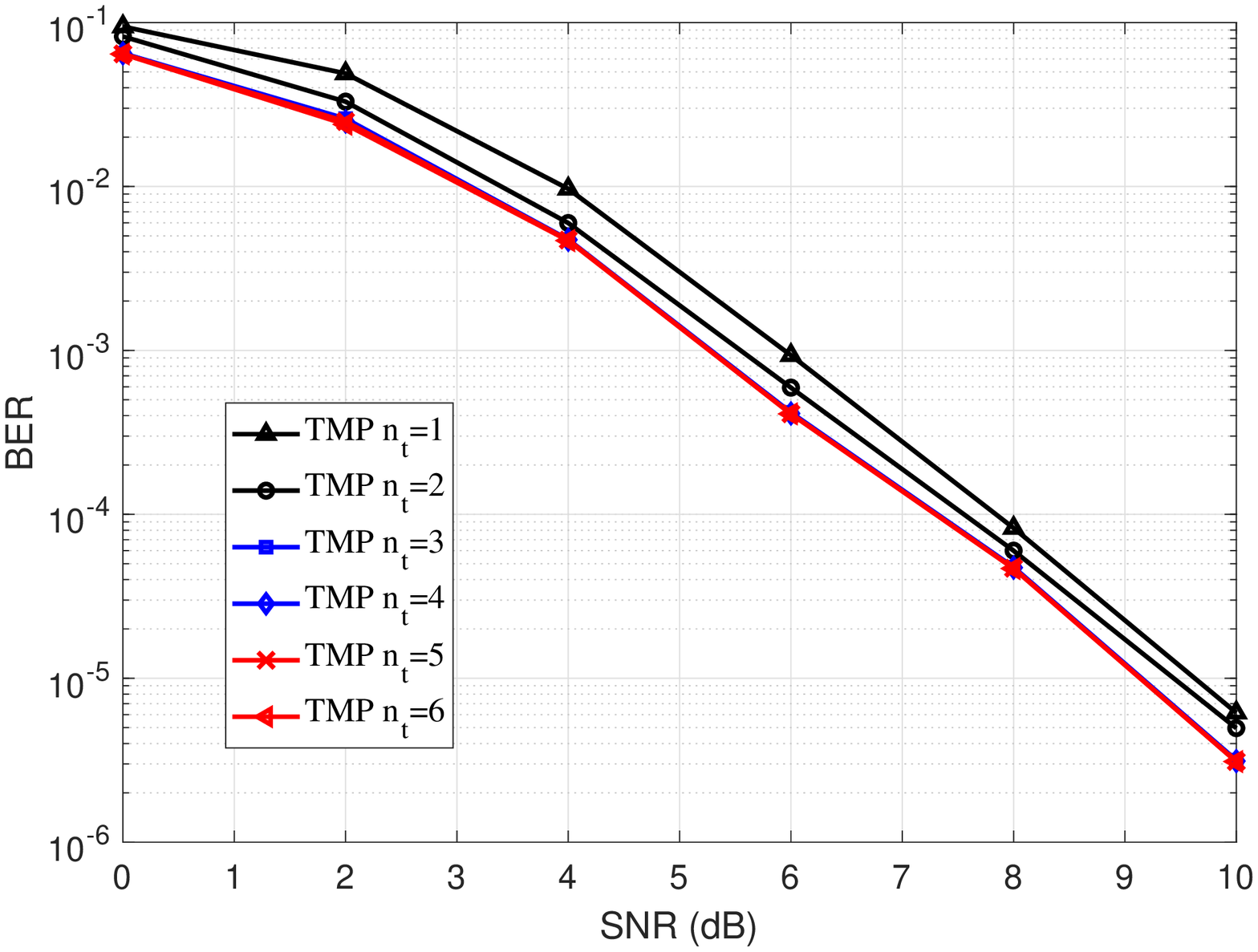}
\caption{BER performance of TMP receiver with different number of iterations for OTFS system.}\label{TMP_ite}
\end{minipage}
\end{figure}
Fig. \ref{EXIT} shows an example of the proposed TMP receiver's
EXIT charts for QPSK at different signal-to-noise ratio (SNR)
levels. Here we select a typical urban channel model \cite{failli1989digital} and generate the Doppler shift for each delay by using the Jakes formulation \cite{khammammetti2018otfs,surabhi2019low,raviteja2018interference} with maximum Doppler frequency shift ${{\nu _{max}}}=1111$ Hz.
We observe that the MI ${I_e}$ increases with ${I_i}$,
which means that the output ${{\bf{L}}_e}$ becomes more
reliable as the input ${{\bf{L}}_i}$ becomes better.
We also show the trajectories of iterative process of the
TMP receiver in Fig. \ref{EXIT}. Note that the system
trajectories closely follow the transfer curves of the two
MP equalizers and eventually reach the corresponding convergence
point (where the transfer curves inter-set) for different SNRs.
The convergence point becomes more reliable as SNR grows,
even approaching the ideal mutual information of 2 bits per
QPSK symbol.

The slight discrepancy between trajectories and
transfer curves can be attributed to the Gaussian model
approximation of the conditional distribution ${{f_{L_i^{(\text{S})}}}({l^{(\text{S})}}\left| {{x^{(\text{S})}}} \right.)}$. In addition, we can
estimate the number of required
iterations for the proposed TMP receiver to converge
by counting the number of staircase steps that follow
the trajectory
curves of Fig. \ref{EXIT}. As we can see,
three iterations are typically sufficient to achieve the desired performance. This analysis is also verified in
Fig. \ref{TMP_ite}, where the performance improvement
becomes negligible beyond three iterations.

%

\section{Reduced Complexity Receivers}\label{S_MP}

From the algorithm discussion, the complexity of the proposed
receivers can be attributed to the MP algorithm.  Clearly, for
each main loop iteration of the MP algorithm, the number of complex multiplications (CMs) required in steps (\ref{5}), (\ref{6}) and (\ref{7}) are $2MNGDQ$, $MNGD(4Q+1)$ and $5MNGDQ$, respectively. Therefore, the overall computational complexity required respectively for ICMP and TMP receivers are $n_{iter}MNGD(11Q+1)$ and $n_t n_{iter}MNGD(11Q+1)$.

We note that the proposed receiver complexity depends critically
on the  number of non-zero channel ISI terms
(i.e., $D$) which represent channel sparsity.
However, $D$ can sometimes remain relatively large, e.g. over
150 in our experiments because of many off-grid delays and
Doppler shifts. Thus, to further reduce receiver complexity,
we propose a simplified MP algorithm by trimming of
some graph edges from participating in message passing
and update. Although such approximation may lead to some
performance loss, edge trimming can also reduce the number of
short cycles in the corresponding factor graph,
which may in fact improve the performance.

The basic idea is to apply Gaussian approximation to part of the channel interferers (i.e., part of the connections),
such that the factor graph can be simplified by trimming these
edges. Specifically, for each observation node $y[d]$,
we would sort the corresponding $D$ channel coefficients
based on their sizes. We choose $R$ largest terms
and the corresponding edges to remain in the graph
while removing the rest.

Through this process, the received signal $y[d]$ in (\ref{4}) can be rewritten as
\begin{align}
y[d] = \sum\limits_{e \in \Phi (d)} H[d,e]x[e] + \underbrace {\sum\limits_{e \in \bar \Phi (d)} H[d,e]x[e] + z[d]}_{z'[d]},
\end{align}
where ${\Phi (d)}$ represents the set of indices with $R$ largest
terms in ${{\cal I}(d)}$ and ${\bar \Phi (d)}$ denotes the set containing the indices for the remaining $(D-R)$ terms.
We use ${z'[d]}$ to denote the
new noise term to be
approximated as a Gaussian random variable with mean and variance:
\begin{align}
{\mu _{z'}}[d] = \sum\limits_{e \in \bar \Phi (d)} {\sum\limits_{j = 1}^Q {{\omega _e}} } ({a_j}){a_j}H[d,e],
\end{align}
\begin{align}
{({\sigma _{z'}}[d])^2} = \sum\limits_{e \in \bar \Phi (d)} {\left( {\sum\limits_{j = 1}^Q {{\omega _e}} ({a_j}){{\left| {{a_j}} \right|}^{\rm{2}}}{{\left| {H[d,e]} \right|}^{\rm{2}}} - {{\left| {\sum\limits_{j = 1}^Q {{\omega _e}} ({a_j}){a_j}H[d,e]} \right|}^{\rm{2}}}} \right)}  + \sigma _N^2.
\end{align}
Therefore, the messages $\mu _{d,c}^{(\kappa)}$ and ${{(\sigma _{d,c}^{(\kappa)})}^2}$ passed from observation node $y[d]$ to variable nodes $x[c],c\in{\Phi (d)}$ in the $\kappa$-th iteration can be expressed as
\begin{align}
\mu _{d,c}^{(\kappa)} = \sum\limits_{e \in \Phi (d),e \ne c} {\sum\limits_{j = 1}^Q {p_{e,d}^{(\kappa - 1)}} } ({a_j}){a_j}H[d,e] + {\mu _{z'}}[d],
\end{align}
\begin{align}
{(\sigma _{d,c}^{(\kappa)})^2} = \sum\limits_{e \in \Phi (d),e \ne c} {\left( {\sum\limits_{j = 1}^Q {p_{e,d}^{(\kappa - 1)}} ({a_j}){{\left| {{a_j}} \right|}^{\rm{2}}}{{\left| {H[d,e]} \right|}^{\rm{2}}} - {{\left| {\sum\limits_{j = 1}^Q {p_{e,d}^{(\kappa - 1)}} ({a_j}){a_j}H[d,e]} \right|}^{\rm{2}}}} \right)}  + {({\sigma _{z'}}[d])^2}.
\end{align}

Similarly, the message passed from variable node $x[c]$ to observation nodes $y[d],d\in \Psi (c)$ in the $\kappa$-th iteration can still be given in (\ref{7}) with only modification of $\alpha _{c,d}^{(\kappa)}({a_j})$ as
\begin{align}
\alpha _{c,d}^{(\kappa)}({a_j}) = \log \frac{{{\omega _c}({a_j})}}{{{\omega _c}({a_Q})}} + \sum\limits_{e \in \Psi (c),e \ne d} {\log \frac{{\varepsilon _{e,c}^{(\kappa)}({a_j})}}{{\varepsilon _{e,c}^{(\kappa)}({a_Q})}}},
\end{align}
where $\Psi (c)$ includes the indices of all the observation nodes that are connected to the variable node $x[c]$ in the simplified factor graph.

As we can see, all the edges participate in message updates
in the original MP algorithm whereas the proposed algorithm
of simplified MP only retains a subset of edges.
Consequently, the overall complexity is reduced to
$n_{iter}MNGR(11Q+1)$ and $n_t n_{iter}MNGR(11Q+1)$ for simplified ICMP (S-ICMP) receiver and simplified TMP (S-TMP) receiver, respectively.

\section{Simulation Results}\label{simulation}

In this section, we test the performance of
our
proposed FSS receivers for OTFS systems in high-mobility time-varying channels.
For simplicity, we consider that the carrier frequency
is $4$ GHz with typical subcarrier spacing ${\Delta f}=15$ kHz. Unless otherwise stated,
Gray-mapped QPSK is the modulation and the RRC rolloff factor
in transmitter and receiver is set to $0.4$.
In addition, we consider $N=32$ time slots and $M=128$
subcarriers in the time-frequency domain.
The speed of the mobile user is set to $\lambda  = 300$ km/h, leading to a maximum Doppler frequency shift ${{\nu _{max}}}=1111$ Hz.
We adopt a typical urban channel model \cite{failli1989digital} with exponentially decaying power delay profile $p(\tau ) = {e^{ - \tau }}$ ($\tau $ is in $\mu$s)
and generate the Doppler shift for each delay by using the Jakes
formulation \cite{khammammetti2018otfs,surabhi2019low,raviteja2018interference},
i.e., ${\nu _i} = {\nu _{max}}\cos ({\rho _i})$, where ${\rho _i}$ is uniformly distributed over $[ - \pi ,\pi ]$.

We first assume that the CSI is known at the receiver. We then
investigate the effect of imperfect CSI on OTFS performance.
Without loss of generality, we choose $\Delta=0.7$, $\varrho=0.1$ and $n_{iter}=20$ for \textbf{Algorithm \ref{alg:A}} and set $G=2$, $n_{t}=3$. All simulation results are from averaging results
over 500 realizations.
\begin{figure}
  \centering
  \includegraphics[width=3.8in]{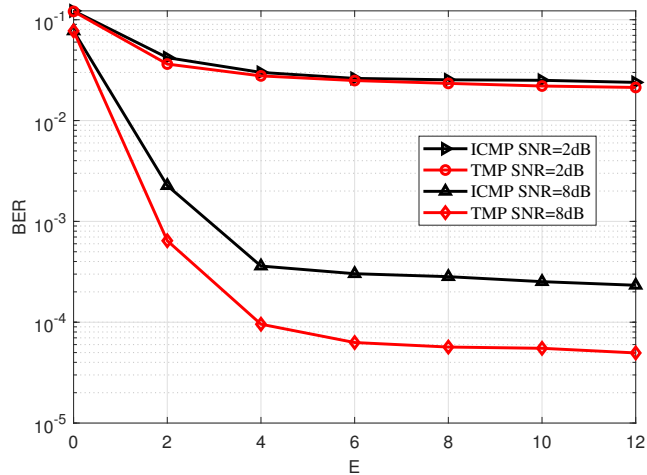}
  \caption{BER performance of OTFS for different numbers of $E$.}\label{E_MP}
\end{figure}

We first study the effects of approximation $E_i$
on OTFS performance. For simplicity, we consider the same
$E_i$ for all paths, i.e., ${E_i} = E,\forall i$.  Fig. \ref{E_MP} illustrates the
BER performance of OTFS system versus different numbers
of $E$ for different receivers under various levels of SNR
(signal quality).
We can see significant performance improvement when
$E$ increases from $0$ to $6$ at the expense of higher complexity.
We also notice a performance saturation thereafter for both
receivers, indicating that, because of many small ISI channel
taps for off-grid delays and Doppler shifts,
very large choices of $E$ do not noticeably
improve receiver performance.
In the rest of our experiments, we shall use $E=6$ unless
otherwise noted.

Fig. \ref{MP_Receiver} compares the BER performance of OTFS system for different receiver designs. To highlight the superiority of the proposed FSS architecture, we also provide the benchmark
performance of traditional SSS receiver
by limiting on-the-grid delay/Doppler shifts in Fig. \ref{MP_Receiver}. The results reveal that every receiver benefits from higher SNR.
However, our proposed FSS receivers outperform SSS receivers significantly owing to the utilization of
channel diversity gain through fractionally spaced sampling.
We also note that the modest BER performance difference
between on-the-grid and off-grid delay/Doppler shifts.
This strongly support the robustness and the practicality of
our proposed receivers given their ability to tackle
any values of delay and Doppler shift.

In general, our proposed TMP receiver achieves
superior performance to ICMP receiver through turbo iterations. This performance advantage stems from the fact that
ICMP receiver suffers from a large number of short cycles
in the channel factor graph and is more prone to
convergence to local optimum.
\begin{figure}
  \centering
  \includegraphics[width=3.8in]{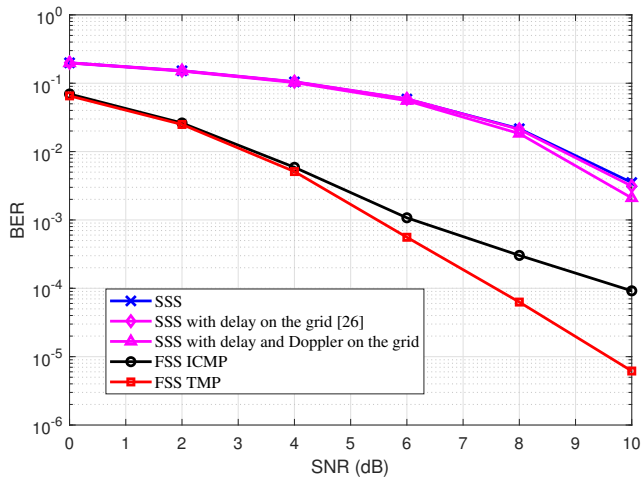}
  \caption{BER performance comparison of OTFS with different receiver designs.}\label{MP_Receiver}
\end{figure}

\begin{figure}
  \centering
  \includegraphics[width=3.8in]{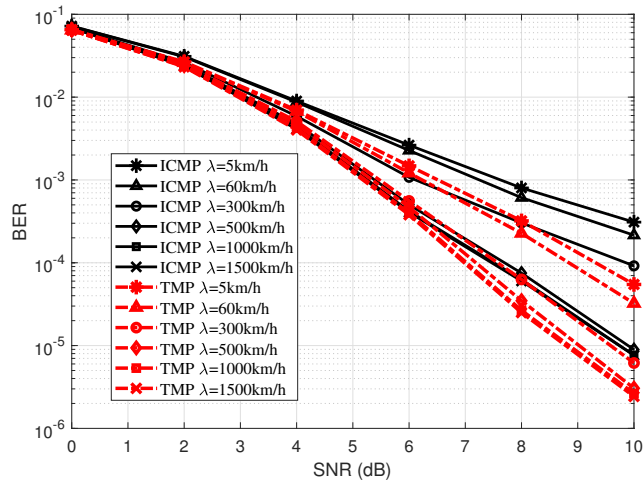}
  \caption{BER performance of OTFS with different user mobile velocities.}\label{MP_Speed}
\end{figure}
Fig. \ref{MP_Speed} shows the BER performance of OTFS system
under various user mobile velocities (i.e., various maximum Doppler shifts).
The results show that the performance improves gradually
as the user velocity increases from 5 km/h to 500 km/h and saturates beyond 500 km/h.
This result would have been surprising to traditional
modulation schemes and equalizers that require quasi-static
channels.
In OTFS, however, the modulation in the delay-Doppler domain
in fact can benefit from larger Doppler shift as a larger
number of multiple paths becomes more distinct.
Our OTFS receiver can resolve a larger number of paths
in the Doppler dimension with the help of higher user velocity.
As a result, better diversity gain becomes possible.

We again notice that the proposed TMP receiver outperforms ICMP receiver for different velocities, which further
exhibits the advantage of TMP receiver over the ICMP receiver
for high mobility users.

Fig. \ref{MP_Grid} shows the BER performance of OTFS transmission with different system parameters. We can observe that the performance of ICMP and TMP receivers degrades as $M$ and $N$ decrease due to the lower resolution of delay-Doppler grid. This leads to the diversity loss since the receiver resolves a smaller number of paths in the channel. We also notice that our FSS receiver can exhibit a certain level of gains even for the high order of modulation (e.g., 16QAM).
These analyses strongly support the consistency of our proposed receivers across different system parameters.
\begin{figure}
  \centering
  \includegraphics[width=3.8in]{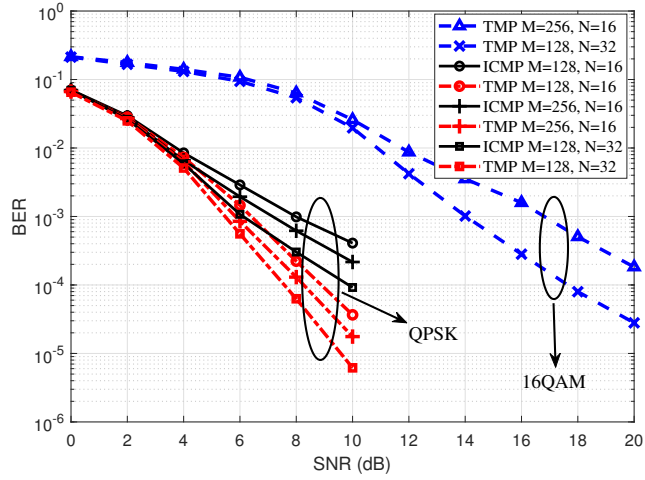}
  \caption{BER performance of OTFS with different system parameters.}\label{MP_Grid}
\end{figure}

For low complexity,
Fig. \ref{SMP} shows the BER performance of OTFS system with the proposed simplified MP receivers.
The results clearly show that as $R$ increases,
the performance of S-ICMP receiver and S-TMP receiver
would approach the performance  of ICMP receiver and TMP receiver,
respectively.
It is worth noting that even with $R=50$,
we already achieve a complexity reduction by the factor
of around 3 since $D$ is around 150 in our simulation. We further note that the
performance loss for the simplified receivers
are rather insignificant even if we select
smaller $R$. Therefore, our proposed simplified MP receivers can provide the desirable trade-off between complexity and performance.

\begin{figure}
\begin{minipage}[t]{0.48\textwidth}
\centering
\includegraphics[width=3.3in]{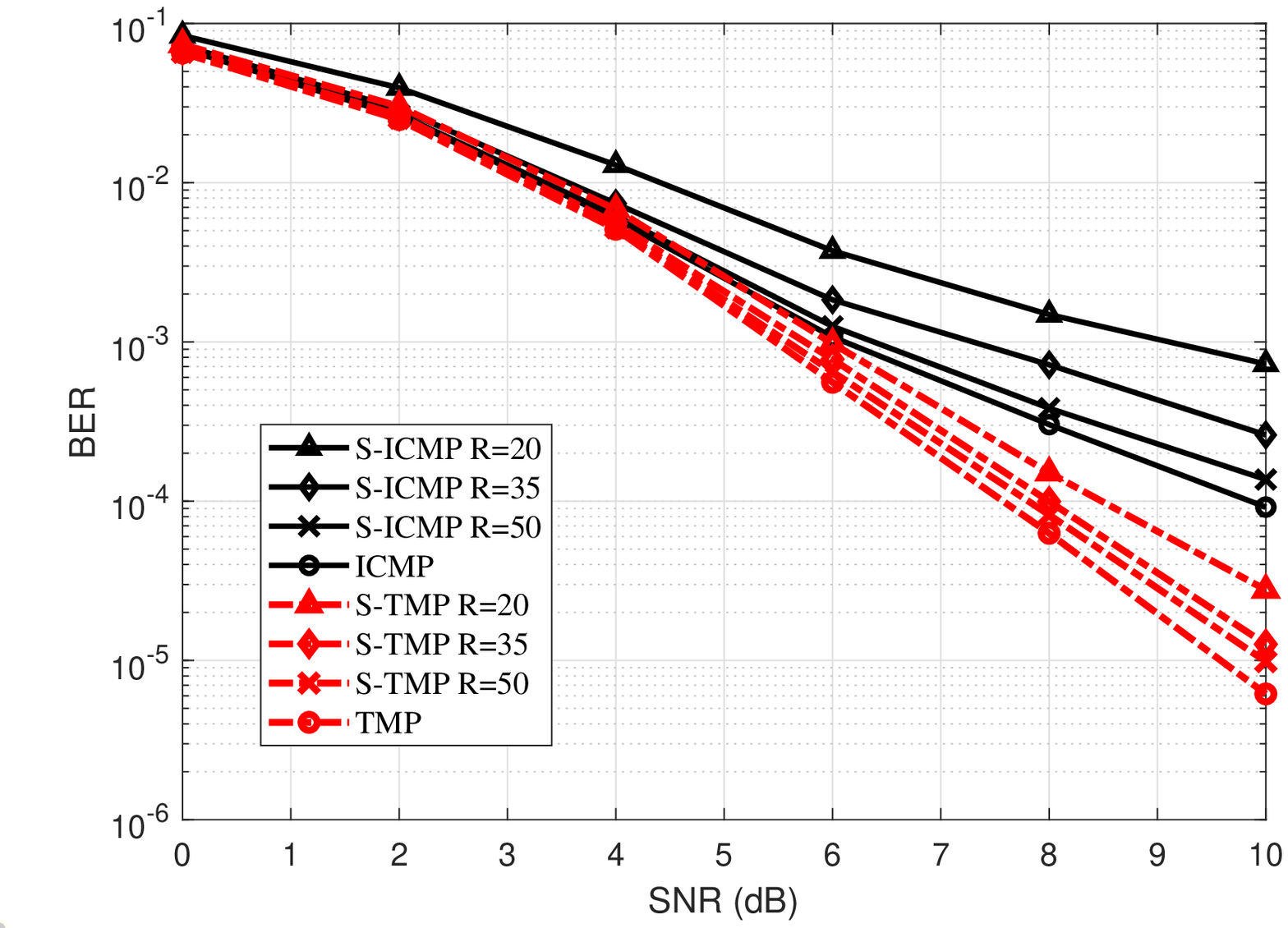}
\caption{BER performance of OTFS with simplified MP receivers.}\label{SMP}
\end{minipage}
\hfill
\begin{minipage}[t]{0.48\textwidth}
\centering
\includegraphics[width=3.3in]{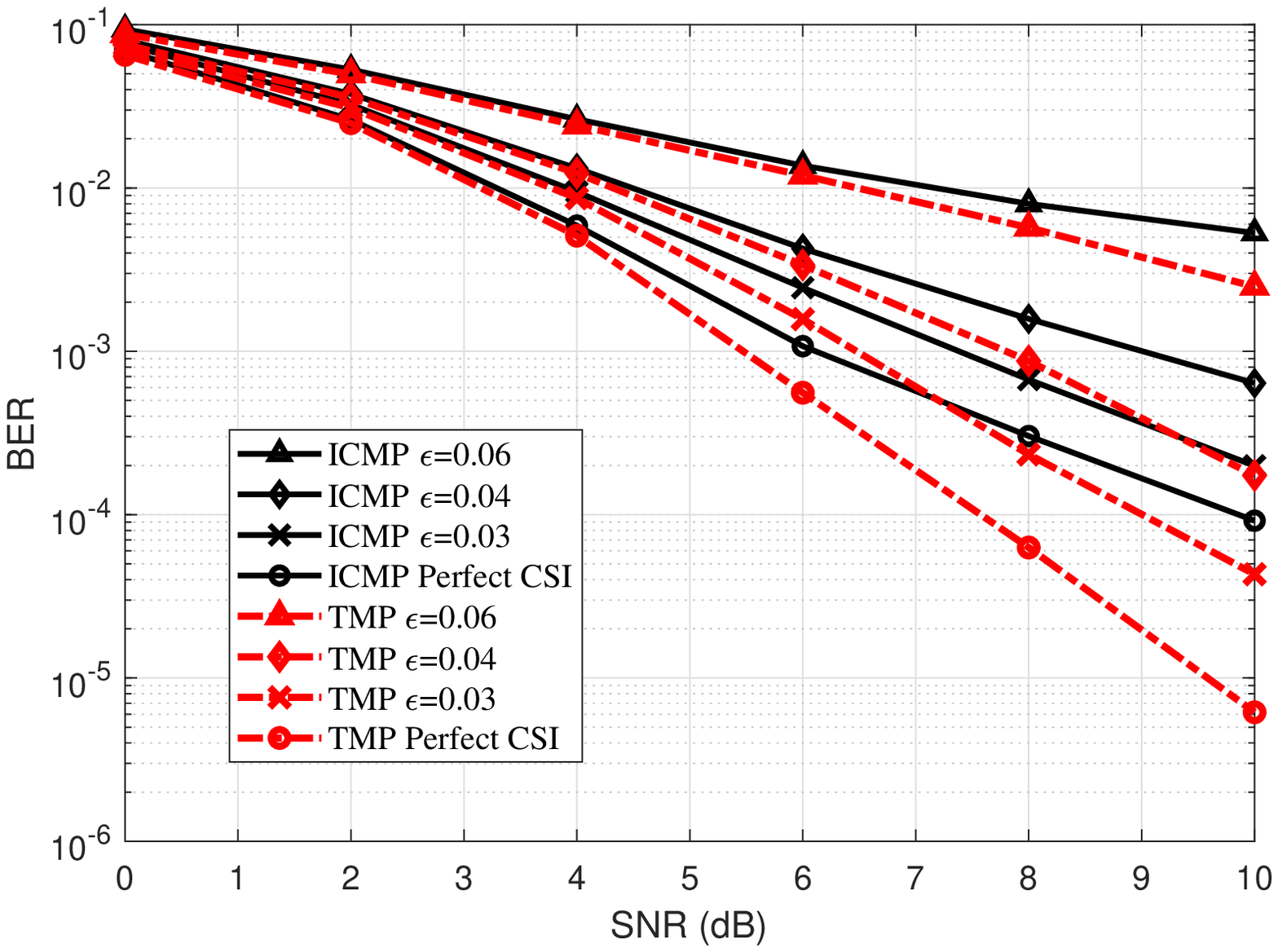}
\caption{BER performance of OTFS with imperfect CSI.}\label{ICSI}
\end{minipage}
\end{figure}
Finally, we test the effect of CSI uncertainty
on the BER performance of OTFS system in Fig. \ref{ICSI}.
In practice, the receiver can only acquire CSI based on
pilots and training which consume power and spectrum
resources.
It is therefore common that receivers must function under
CSI uncertainty. We characterize the CSI error by adopting
the following model \cite{sun2016multi}:
\begin{align*}
{h_i} &= {{\hat h}_i} + \Delta {h_i},\ \left\| {\Delta {h_i}} \right\| \le {\epsilon _{{h_i}}},\nonumber\\
{\tau _i} &= {{\hat \tau }_i} + \Delta {\tau _i},\ \left\| {\Delta {\tau _i}} \right\| \le {\epsilon _{{\tau _i}}},\nonumber\\
{\nu _i} &= {{\hat \nu }_i} + \Delta {\nu _i},\ \left\| {\Delta {\nu _i}} \right\| \le {\epsilon _{{\nu _i}}},\nonumber
\end{align*}
where ${{\hat h}_i}$, ${{\hat \tau }_i}$ and ${{\hat \nu }_i}$ are the estimated versions of ${h_i}$, ${\tau _i}$ and ${\nu _i}$. $\Delta {h_i}$, $\Delta {\tau _i}$ and $\Delta {\nu _i}$ represent the corresponding channel estimation errors, whose norms
are bounded with the given radius ${\epsilon _{{h_i}}}$,
${\epsilon _{{\tau _i}}}$ and ${\epsilon _{{\nu _i}}}$,
respectively. For simplicity, we assume that
${\epsilon _{{h_i}}} = \epsilon \left\| {{{\hat h}_i}} \right\|$,
${\epsilon _{{\tau _i}}} = \epsilon \left\| {{{\hat \tau }_i}}
 \right\|$ and ${\epsilon _{{\nu _i}}} = \epsilon \left\|
 {{{\hat \nu }_i}} \right\|,\forall i$. From Fig. \ref{ICSI},
we can observe mild performance loss for modest
levels of channel uncertainty $\epsilon$.
Without sudden and large drop of receiver performance
as channel uncertainty grows, our proposed new receiver
architecture is robust and can handle typical CSI
errors.


\section{Conclusion}\label{Conclusion}
In this paper, we investigated the design of
practical OTFS receivers to address several practical
considerations. First, when the practical non-ideal
rectangular pulses are used in OTFS transmissions,
we derived the OTFS input-output signal relationship
in the delay-Doppler domain.
Utilizing a compact vectorized form, we illustrated a simple
sparse representation of the channel model.
We further recognized that the use of rectangular OTFS pulses
require bandlimiting pulse shaping filter at the
transmitter and matched filter at the receiver.
Using the traditional RRC pulseshaping,
we developed a fractionally spaced sampling (FSS) framework
for receiver design and proposed two effective receivers
for symbol detection in the delay-Doppler domain.
Our FSS receivers can exploit channel diversity gain and our
EXIT chart analysis demonstrate their rapid convergence.
Furthermore, we proposed simplified MP method to further
reduce the complexity for both the proposed receivers.
Our results demonstrated stronger performance over conventional
receivers and robustness against channel uncertainty
and modeling errors.


%

\appendices
\section{}\label{Appe_A}
Define $V(t,f) = \int {g_{rx}^*(t' - t)N(t'){e^{ - j2\pi f(t' - t)}}dt'}$.
Combining (\ref{transmit_s}), (\ref{receive_r}) and (\ref{receive_Y}), we can rewrite ${Y(t,f)}$
\begin{align}
 =& \int {g_{rx}^*(t' - t)\left[ {\sum\limits_{p = 0}^{P - 1} {\sum\limits_{i = 1}^L {{h_i}{e^{j2\pi {\nu _i}\left( {t' - p{T_s}} \right)}}{{\mathop{\rm P}\nolimits} _\text{{rc}}}(p{T_s} - {\tau _i})} s(t' - p{T_s})} } \right]{e^{ - j2\pi f(t' - t)}}dt'} + V(t,f)\\
 =& \sum\limits_{n' = 0}^{N - 1} {\sum\limits_{m' = 0}^{M - 1} {X[n',m']\sum\limits_{p = 0}^{P - 1} {\sum\limits_{i = 1}^L {{h_i}{{\mathop{\rm P}\nolimits} _\text{{rc}}}(p{T_s} - {\tau _i})} \left[ {\int {g_{rx}^*(t' - t){g_{tx}}(t' - p{T_s} - n'T){e^{j2\pi {\nu _i}\left( {t' - p{T_s}} \right)}}} } \right.} } }\nonumber\\
&\; \left. { \times {e^{j2\pi \left( {m'{\rm{ - }}\frac{{M{\rm{ - 1}}}}{{\rm{2}}}} \right)\Delta f(t' - p{T_s} - n'T)}}{e^{ - j2\pi f(t' - t)}}dt'} \right] + V(t,f).
\end{align}

After sampling, we have
\begin{align}
Y[n,m] = \sum\limits_{n' = 0}^{N - 1} {\sum\limits_{m' = 0}^{M - 1} {{H_{n,m}}[n',m']X[n',m']} }  + V[n,m],
\end{align}
where $V[n,m] = \int {g_{rx}^*(t' - nT)N(t'){e^{ - j2\pi \left( {m{\rm{ - }}\frac{{M{\rm{ - 1}}}}{{\rm{2}}}} \right)\Delta f(t' - nT)}}dt'}$ and
\begin{align}\label{H_TF_G}
{H_{n,m}}[n',m'] &= \sum\limits_{p = 0}^{P - 1} {\sum\limits_{i = 1}^L {{h_i}{{\mathop{\rm P}\nolimits} _\text{{rc}}}(p{T_s} - {\tau _i})\left[ {\int {g_{rx}^*(t' - nT){g_{tx}}(t' - p{T_s} - n'T){e^{j2\pi {\nu _i}\left( {t' - p{T_s}} \right)}}} } \right.} }\nonumber\\
&\quad \left. { \times {e^{j2\pi \left( {m'{\rm{ - }}\frac{{M{\rm{ - 1}}}}{{\rm{2}}}} \right)\Delta f(t' - p{T_s} - n'T)}}{e^{ - j2\pi \left( {m{\rm{ - }}\frac{{M{\rm{ - 1}}}}{{\rm{2}}}} \right)\Delta f(t' - nT)}}dt'} \right].
\end{align}
Changing variable $t'' = t' - p{T_s} - n'T$, we complete
the proof by rewriting (\ref{H_TF_G}) as
\begin{align}
{H_{n,m}}[n',m'] &= \sum\limits_{p = 0}^{P - 1} {\sum\limits_{i = 1}^L {{h_i}{{\mathop{\rm P}\nolimits} _\text{{rc}}}(p{T_s} - {\tau _i})\left[ {\int {g_{rx}^*(t'' - (n - n')T + p{T_s}){g_{tx}}(t''){e^{j2\pi {\nu _i}\left( {t'' + n'T} \right)}}} } \right.} } \nonumber\\
&\quad \left. { \times {e^{j2\pi \left( {m'{\rm{ - }}\frac{{M{\rm{ - 1}}}}{{\rm{2}}}} \right)\Delta ft''}}{e^{ - j2\pi \left( {m{\rm{ - }}\frac{{M{\rm{ - 1}}}}{{\rm{2}}}} \right)\Delta f(t'' - (n - n')T + p{T_s})}}dt''} \right]\\
& = \sum\limits_{p = 0}^{P - 1} {\sum\limits_{i = 1}^L {{h_i}{{\mathop{\rm P}\nolimits} _\text{{rc}}}(p{T_s} - {\tau _i}){A_{{g_{rx}},{g_{tx}}}}\left( {(n - n')T - p{T_s},(m - m')\Delta f - {\nu _i}} \right)} }\nonumber\\
&\quad  \times {e^{j\pi \left( {M - 1} \right)\Delta f(p{T_s} - (n - n')T)}}{e^{j2\pi m'\Delta f((n - n')T - p{T_s})}}{e^{j2\pi {\nu _i}\left( {nT - p{T_s}} \right)}}.
\end{align}

\section{}\label{Appe_B}
Combining (\ref{transmit_X}), (\ref{relat_SFFT}) and (\ref{relation_TF_G}), we have
\begin{align}
 &\frac{1}{{\sqrt {NM} }}\sum\limits_{n = 0}^{N - 1} {\sum\limits_{m = 0}^{M - 1} {\left[ {\sum\limits_{n' = 0}^{N - 1} {\sum\limits_{m' = 0}^{M - 1} {X[n',m']{H_{n,m}}[n',m']} } } \right]{e^{ - j2\pi \left( {\frac{{nk}}{N} - \frac{{m\ell}}{M}} \right)}}} } \\
 = & \frac{1}{{NM}}\sum\limits_{n = 0}^{N - 1} {\sum\limits_{m = 0}^{M - 1} {\left\{ {\sum\limits_{n' = 0}^{N - 1} {\sum\limits_{m' = 0}^{M - 1} {\left[ {\sum\limits_{k' = 0}^{N - 1} {\sum\limits_{\ell' = 0}^{M - 1} {x[k',\ell']{e^{j2\pi \left( {\frac{{n'k'}}{N} - \frac{{m'\ell'}}{M}} \right)}}} } } \right]{H_{n,m}}[n',m']} } } \right\}{e^{ - j2\pi \left( {\frac{{nk}}{N} - \frac{{m\ell}}{M}} \right)}}} } \\ 
=& \frac{1}{{NM}}\sum\limits_{k' = 0}^{N - 1} {\sum\limits_{\ell' = 0}^{M - 1} {{h_{k,\ell}}[k',\ell']x[k',\ell']} }  
\end{align}
by defining ${h_{k,\ell}}[k',\ell'] = \sum\limits_{n = 0}^{N - 1} {\sum\limits_{m = 0}^{M - 1} {\sum\limits_{n' = 0}^{N - 1} {\sum\limits_{m' = 0}^{M - 1} {{H_{n,m}}[n',m']} } {e^{ - j2\pi \left( {\frac{{nk}}{N} - \frac{{m\ell}}{M}} \right)}}{e^{j2\pi \left( {\frac{{n'k'}}{N} - \frac{{m'\ell'}}{M}} \right)}}} }$.
Hence, we can write $ y[k,\ell] = \frac{1}{{NM}}\sum\limits_{k' = 0}^{N - 1} {\sum\limits_{\ell' = 0}^{M - 1} {{h_{k,\ell}}[k',\ell']x[k',\ell']} } +
\upsilon [k,\ell]$, which completes the proof.

\section{}\label{Appe_C}
Combining (\ref{relation_DD_G}), (\ref{rela_h}) and (\ref{relat_H_R_CP}), we derive the OTFS input-output relationship in delay-Doppler domain separately for $n' = n$ and $n' = {\left[ {n - 1} \right]_N}$.

We first define
\begin{align}
{G_c}({\nu _i}) = {G_s}({\nu _i}) = \sum\limits_{n = 0}^{N - 1} {{e^{ - j2\pi n\left( {\frac{{k - k'}}{N} - {\nu _i}T} \right)}}}  = \frac{{{e^{ - j2\pi ( {k - k' - {k_{{\nu _i}}} - {\beta _{{\nu _i}}}} )}} - 1}}{{{e^{ - j\frac{{2\pi }}{N}( {k - k' - {k_{{\nu _i}}} - {\beta _{{\nu _i}}}} )}} - 1}},
\end{align}
\begin{align}
{F_c}({\nu _i})& = \frac{1}{M}\sum\limits_{c = 0}^{M - 1 - p} {{e^{j2\pi {\nu _i}\left( {\frac{c}{{M\Delta f}} + p{T_s}} \right)}}\sum\limits_{m = 0}^{M - 1} {{e^{ - j2\pi m\left( {\frac{c}{M} + \Delta fp{T_s} - \frac{\ell}{M}} \right)}}} \sum\limits_{m' = 0}^{M - 1} {{e^{j2\pi m'\left( {\frac{c}{M} - \frac{{\ell'}}{M}} \right)}}} }\nonumber \\
& = M\sum\limits_{c = 0}^{M - 1 - p} {{e^{j2\pi \frac{{{k_{{\nu _i}}} + {\beta _{{\nu _i}}}}}{{NM}}\left( {c + p} \right)}}\delta \left( {{{\left[ {c + p - \ell} \right]}_M}} \right)\delta \left( {{{\left[ {c - \ell'} \right]}_M}} \right)},
\end{align}
and
\begin{align}
{F_s}({\nu _i}) &= \frac{1}{M}\sum\limits_{s = M - p}^{M - 1} {{e^{j2\pi {\nu _i}\left( {\frac{s}{{M\Delta f}} + p{T_s} - T} \right)}}\sum\limits_{m = 0}^{M - 1} {{e^{ - j2\pi m\left( {\frac{s}{M} + \Delta fp{T_s} - \Delta fT - \frac{\ell}{M}} \right)}}} \sum\limits_{m' = 0}^{M - 1} {{e^{j2\pi m'\left( {\frac{s}{M} - \frac{{\ell'}}{M}} \right)}}} }\nonumber\\
& = M\sum\limits_{s = M - p}^{M - 1} {{e^{j2\pi \frac{{{k_{{\nu _i}}} + {\beta _{{\nu _i}}}}}{{NM}}\left( {s + p - M} \right)}}\delta \left( {{{\left[ {s + p - \ell} \right]}_M}} \right)\delta \left( {{{\left[ {s - \ell'} \right]}_M}} \right)}.
\end{align}
We also denote
\begin{align}
{\gamma _c}(\ell,p,q,{k_{{\nu _i}}},{\beta _{{\nu _i}}}) = \frac{1}{N}\xi (\ell,p,{k_{{\nu _i}}},{\beta _{{\nu _i}}})\theta (q,{\beta _{{\nu _i}}}),
\end{align}
\begin{align}
{\gamma _s}(k,\ell,p,q,{k_{{\nu _i}}},{\beta _{{\nu _i}}}) = \frac{1}{N}\xi (\ell,p,{k_{{\nu _i}}},{\beta _{{\nu _i}}})\theta (q,{\beta _{{\nu _i}}})\phi (k,q,{k_{{\nu _i}}}),
\end{align}
where $\xi (\ell,p,{k_{{\nu _i}}},{\beta _{{\nu _i}}})$, $\theta (q,{\beta _{{\nu _i}}})$ and $\phi (k,q,{k_{{\nu _i}}})$ are defined in (\ref{Phase_off}), (\ref{Theat_off}) and (\ref{Phi_off}), respectively.

When $n' = n$, we have
\begin{align}
{y_c}[k,\ell] = \frac{1}{{NM}}\sum\limits_{k' = 0}^{N - 1} {\sum\limits_{\ell' = 0}^{M - 1} {h_{k,\ell}^c[k',\ell']x[k',\ell']} },
\end{align}
where
\begin{align}
h_{k,\ell}^c[k',\ell'] &= \sum\limits_{n = 0}^{N - 1} {\sum\limits_{m = 0}^{M - 1} {\sum\limits_{m' = 0}^{M - 1} {{H_{n,m}}[n,m']{e^{ - j2\pi n\left( {\frac{{k - k'}}{N}} \right)}}{e^{j2\pi \left( {\frac{{m\ell - m'\ell'}}{M}} \right)}}} } }\label{h_c}\\
 & = \sum\limits_{n = 0}^{N - 1} {\sum\limits_{m = 0}^{M - 1} {\sum\limits_{m' = 0}^{M - 1} {\left[ {\sum\limits_{p = 0}^{P - 1} {\sum\limits_{i = 1}^L {{h_i}{{\mathop{\rm P}\nolimits} _\text{{rc}}}(p{T_s} - {\tau _i}){A_{{g_{rx}},{g_{tx}}}}\left( { - p{T_s},(m - m')\Delta f - {\nu _i}} \right)} } } \right.} } }\nonumber\\
&\quad  \left. {\times{e^{j\pi \left( {M - 1} \right)\Delta fp{T_s}}}{e^{ - j2\pi m'\Delta fp{T_s}}}{e^{j2\pi {\nu _i}\left( {nT - p{T_s}} \right)}}} \right]{e^{ - j2\pi n\left( {\frac{{k - k'}}{N}} \right)}}{e^{j2\pi \left( {\frac{{m\ell - m'\ell'}}{M}} \right)}}\label{h_c_R}\\
& = \sum\limits_{n = 0}^{N - 1} {\sum\limits_{m = 0}^{M - 1} {\sum\limits_{m' = 0}^{M - 1} {\left[ {\sum\limits_{p = 0}^{P - 1} {\sum\limits_{i = 1}^L {{h_i}{{\mathop{\rm P}\nolimits} _\text{{rc}}}(p{T_s} - {\tau _i})\frac{1}{M}\sum\limits_{c = 0}^{M - 1 - p} {{e^{ - j2\pi \left( {(m - m')\Delta f - {\nu _i}} \right)\left( {\frac{c}{{M\Delta f}} + p{T_s}} \right)}}} } } } \right.} } } \nonumber\\
&\quad  \left. {\times{e^{j\pi \left( {M - 1} \right)\Delta fp{T_s}}}{e^{ - j2\pi m'\Delta fp{T_s}}}{e^{j2\pi {\nu _i}\left( {nT - p{T_s}} \right)}}} \right]{e^{ - j2\pi n\left( {\frac{{k - k'}}{N}} \right)}}{e^{j2\pi \left( {\frac{{m\ell - m'\ell'}}{M}} \right)}}\label{h_c_R_D}\\
& = \sum\limits_{p = 0}^{P - 1} {\sum\limits_{i = 1}^L {{h_i}{{\mathop{\rm P}\nolimits} _\text{{rc}}}(p{T_s} - {\tau _i}){e^{j\pi \frac{{M - 1}}{M}p}}{e^{ - j2\pi {\nu _i}p{T_s}}}{G_c}({\nu _i}){F_c}({\nu _i})} }\label{h_c_R_D_L},
\end{align}

By substituting (\ref{relat_H_R_CP}) in (\ref{h_c}), $h_{k,\ell}^c[k',\ell']$ can be written as in (\ref{h_c_R}), which can be further
written as in (\ref{h_c_R_D}) by replacing the cross-ambiguity
function ${{A_{{g_{rx}},{g_{tx}}}}\left( { - p{T_s},(m - m')\Delta f - {\nu _i}} \right)}$ in (\ref{h_c_R}) with its sampled version. Finally, we obtain $h_{k,\ell}^c[k',\ell']$ in (\ref{h_c_R_D_L})
by separating the terms related to $n$, $m$, $m'$ and $c$, respectively. As a result, we have
\begin{align}
&{y_c}[k,\ell] = \frac{1}{N}\sum\limits_{p = 0}^{P - 1} {\sum\limits_{i = 1}^L {{h_i}{{\mathop{\rm P}\nolimits} _\text{{rc}}}(p{T_s} - {\tau _i}){e^{j\pi \frac{{M - 1}}{M}p}}{e^{ - j2\pi {\nu _i}p{T_s}}}} } \left[ {\sum\limits_{\ell' = 0}^{M - 1} {\sum\limits_{c = 0}^{M - 1 - p} {{e^{j2\pi \frac{{{k_{{\nu _i}}} + {\beta _{{\nu _i}}}}}{{NM}}\left( {c + p} \right)}}} } } \right.\nonumber\\
&\qquad\qquad  \left. {\times\delta \left( {{{\left[ {c + p - \ell} \right]}_M}} \right)\delta \left( {{{\left[ {c - \ell'} \right]}_M}} \right)\sum\limits_{k' = 0}^{N - 1} {{G_c}({\nu _i})x[k',\ell']} } \right]\\
&=\begin{cases}
\sum\limits_{p = 0}^{P - 1} {\sum\limits_{i = 1}^L {\sum\limits_{q = 0}^{N - 1} {{h_i}{{\mathop{\rm P}\nolimits} _\text{{rc}}}(p{T_s} - {\tau _i}){\gamma _c}(\ell,p,q,{k_{{\nu _i}}},{\beta _{{\nu _i}}})x\left[ {{{\left[ {k - {k_{{\nu _i}}} + q} \right]}_N},{{\left[ {\ell - p} \right]}_M}} \right]} } },&p \le \ell < M,\\
0, &\text{otherwise},
\end{cases}\label{relat_DD_c}
\end{align}
where the last equality follows from the change of variable ${k'{\rm{ = }}{{\left[ {k - {k_{{\nu _i}}} + q} \right]}_N}}$.

In a similar fashion, for $n' = {\left[ {n - 1} \right]_N}$, we have
\begin{align}
{y_s}[k,\ell] = \frac{1}{{NM}}\sum\limits_{k' = 0}^{N - 1} {\sum\limits_{\ell' = 0}^{M - 1} {{e^{ - j2\pi \frac{{k'}}{N}}}h_{k,\ell}^s[k',\ell']x[k',\ell']} },
\end{align}
where
\begin{align}
h_{k,\ell}^s[k',\ell'] &= \sum\limits_{n = 0}^{N - 1} {\sum\limits_{m = 0}^{M - 1} {\sum\limits_{m' = 0}^{M - 1} {{H_{n,m}}\left[ {{{\left[ {n - 1} \right]}_N},m'} \right]{e^{ - j2\pi n\left( {\frac{{k - k'}}{N}} \right)}}{e^{j2\pi \left( {\frac{{m\ell - m'\ell'}}{M}} \right)}}} } }\\
& = \sum\limits_{n = 0}^{N - 1} {\sum\limits_{m = 0}^{M - 1} {\sum\limits_{m' = 0}^{M - 1} {\left[ {\sum\limits_{p = 0}^{P - 1} {\sum\limits_{i = 1}^L {{h_i}{{\mathop{\rm P}\nolimits} _\text{{rc}}}(p{T_s} - {\tau _i})\frac{1}{M}\sum\limits_{s = M - p}^{M - 1} {{e^{ - j2\pi \left( {(m - m')\Delta f - {\nu _i}} \right)\left( {\frac{s}{{M\Delta f}} + p{T_s} - T} \right)}}} } } } \right.} } } \nonumber\\
&\quad\left. { \times {e^{j\pi \left( {M - 1} \right)\Delta f(p{T_s} - T)}}{e^{j2\pi m'\Delta f(T - p{T_s})}}{e^{j2\pi {\nu _i}\left( {nT - p{T_s}} \right)}}} \right]{e^{ - j2\pi n\left( {\frac{{k - k'}}{N}} \right)}}{e^{j2\pi \left( {\frac{{m\ell - m'\ell'}}{M}} \right)}}\\
& = \sum\limits_{p = 0}^{P - 1} {\sum\limits_{i = 1}^L {{h_i}{{\mathop{\rm P}\nolimits} _\text{{rc}}}(p{T_s} - {\tau _i}){e^{j\pi \frac{{M - 1}}{M}p}}{e^{ - j\pi \left( {M - 1} \right)}}{e^{ - j2\pi {\nu _i}p{T_s}}}{G_s}({\nu _i}){F_s}({\nu _i})} }.
\end{align}

Thus, ${y_s}[k,\ell]$ can be obtained as
\begin{align}
&{y_s}[k,\ell] = \frac{1}{N}\sum\limits_{p = 0}^{P - 1} {\sum\limits_{i = 1}^L {{h_i}{{\mathop{\rm P}\nolimits} _\text{{rc}}}(p{T_s} - {\tau _i}){e^{j\pi \frac{{M - 1}}{M}p}}{e^{ - j2\pi {\nu _i}p{T_s}}}{e^{ - j\pi \left( {M - 1} \right)}}\left[ {\sum\limits_{\ell' = 0}^{M - 1} {\sum\limits_{s = M - p}^{M - 1} {{e^{j2\pi \frac{{{k_{{\nu _i}}} + {\beta _{{\nu _i}}}}}{{NM}}\left( {s + p - M} \right)}}} } } \right.} }\nonumber\\
& \qquad\qquad  \left. {\times\delta \left( {{{\left[ {s + p - \ell} \right]}_M}} \right)\delta \left( {{{\left[ {s - \ell'} \right]}_M}} \right)\sum\limits_{k' = 0}^{N - 1} {{G_s}({\nu _i}){e^{ - j2\pi \frac{{k'}}{N}}}x[k',\ell']} } \right]\\
&=
\begin{cases}
 \sum\limits_{p = 0}^{P - 1} {\sum\limits_{i = 1}^L {\sum\limits_{q = 0}^{N - 1} {{h_i}{{\mathop{\rm P}\nolimits} _\text{{rc}}}(p{T_s} - {\tau _i}){\gamma _s}(k,\ell,p,q,{k_{{\nu _i}}},{\beta _{{\nu _i}}})x\left[ {{{\left[ {k - {k_{{\nu _i}}} + q} \right]}_N},{{\left[ {\ell - p} \right]}_M}} \right]} } },&0 \le \ell < p,\\
0, &\text{otherwise}.
\end{cases}\label{relat_DD_s}
\end{align}

Finally, by combining (\ref{relat_DD_c}) and (\ref{relat_DD_s}), the input-output relationship of OTFS in delay-Doppler domain can be obtained as in (\ref{relation_DD_R}), which completes the proof.

\ifCLASSOPTIONcaptionsoff
  \newpage
\fi



%




\bibliographystyle{IEEEtran}
\footnotesize
\bibliography{ref_OTFS}

\begin{thebibliography}{10}
\providecommand{\url}[1]{#1}
\csname url@samestyle\endcsname
\providecommand{\newblock}{\relax}
\providecommand{\bibinfo}[2]{#2}
\providecommand{\BIBentrySTDinterwordspacing}{\spaceskip=0pt\relax}
\providecommand{\BIBentryALTinterwordstretchfactor}{4}
\providecommand{\BIBentryALTinterwordspacing}{\spaceskip=\fontdimen2\font plus
\BIBentryALTinterwordstretchfactor\fontdimen3\font minus
  \fontdimen4\font\relax}
\providecommand{\BIBforeignlanguage}[2]{{%
\expandafter\ifx\csname l@#1\endcsname\relax
\typeout{** WARNING: IEEEtran.bst: No hyphenation pattern has been}%
\typeout{** loaded for the language `#1'. Using the pattern for}%
\typeout{** the default language instead.}%
\else
\language=\csname l@#1\endcsname
\fi
#2}}
\providecommand{\BIBdecl}{\relax}
\BIBdecl

\bibitem{wang2006performance}
T.~Wang, J.~G. Proakis, E.~Masry, and J.~R. Zeidler, ``Performance degradation
  of {OFDM} systems due to {D}oppler spreading,'' \emph{IEEE Trans. Wireless
  Commun.}, vol.~5, no.~6, pp. 1422--1432, Jun. 2006.

\bibitem{cai2003bounding}
X.~Cai and G.~B. Giannakis, ``Bounding performance and suppressing intercarrier
  interference in wireless mobile {OFDM},'' \emph{IEEE Trans. Commun.},
  vol.~51, no.~12, pp. 2047--2056, Dec. 2003.

\bibitem{das2007max}
S.~Das and P.~Schniter, ``Max-{SINR} {ISI}/{ICI}-shaping multicarrier
  communication over the doubly dispersive channel,'' \emph{IEEE Trans. Signal
  Process.}, vol.~55, no.~12, pp. 5782--5795, Dec. 2007.

\bibitem{zhao2001intercarrier}
Y.~Zhao and S.-G. Haggman, ``Intercarrier interference self-cancellation scheme
  for {OFDM} mobile communication systems,'' \emph{IEEE Trans. Commun.},
  vol.~49, no.~7, pp. 1185--1191, Jul. 2001.

\bibitem{dean2017new}
T.~Dean, M.~Chowdhury, and A.~Goldsmith, ``A new modulation technique for
  {D}oppler compensation in frequency-dispersive channels,'' in \emph{Proc.
  IEEE PIMRC}, Montreal, QC, Canada, Oct. 2017, pp. 1--7.

\bibitem{hadani2017orthogonal}
R.~Hadani, S.~Rakib, M.~Tsatsanis, A.~Monk, A.~J. Goldsmith, A.~F. Molisch, and
  R.~Calderbank, ``Orthogonal time frequency space modulation,'' in \emph{Proc.
  IEEE WCNC}, San Francisco, CA, USA, Mar. 2017, pp. 1--6.

\bibitem{ramachandran2018mimo}
M.~K. Ramachandran and A.~Chockalingam, ``{MIMO}-{OTFS} in high-{D}oppler
  fading channels: Signal detection and channel estimation,'' in \emph{Proc.
  IEEE Global Commun. Conf. (GLOBECOM)}, Abu Dhabi, United Arab Emirates, Dec.
  2018, pp. 206--212.

\bibitem{khammammetti2018otfs}
V.~Khammammetti and S.~K. Mohammed, ``{OTFS}-based multiple-access in high
  {D}oppler and delay spread wireless channels,'' \emph{IEEE Wireless Commun.
  Lett.}, vol.~8, no.~2, pp. 528--531, Apr. 2018.

\bibitem{ding2019otfs}
Z.~Ding, R.~Schober, P.~Fan, and H.~V. Poor, ``{OTFS}-{NOMA}: An efficient
  approach for exploiting heterogenous user mobility profiles,'' \emph{IEEE
  Trans. Commun.}, vol.~67, no.~11, pp. 7950--7965, Nov. 2019.

\bibitem{raviteja2019orthogonal}
P.~Raviteja, K.~T. Phan, Y.~Hong, and E.~Viterbo, ``Orthogonal time frequency
  space ({OTFS}) modulation based radar system,'' in \emph{Proc. IEEE Radar
  Conf. (RadarConf)}, 2019, pp. 1--6.

\bibitem{gaudio2019performance}
L.~Gaudio, M.~Kobayashi, B.~Bissinger, and G.~Caire, ``Performance analysis of
  joint radar and communication using {OFDM} and {OTFS},'' in \emph{Proc. IEEE
  Int. Conf. Commun. Workshops (ICC Workshops)}, May 2019, pp. 1--6.

\bibitem{raviteja2019otfs}
P.~Raviteja, E.~Viterbo, and Y.~Hong, ``{OTFS} performance on static multipath
  channels,'' \emph{IEEE Wireless Commun. Lett.}, vol.~8, no.~3, pp. 745--748,
  Jun. 2019.

\bibitem{surabhi2019diversity}
G.~Surabhi, R.~M. Augustine, and A.~Chockalingam, ``On the diversity of uncoded
  {OTFS} modulation in doubly-dispersive channels,'' \emph{IEEE Trans. Wireless
  Commun.}, vol.~18, no.~6, pp. 3049--3063, Jun. 2019.

\bibitem{raviteja2019effective}
P.~Raviteja, Y.~Hong, E.~Viterbo, and E.~Biglieri, ``Effective diversity of
  {OTFS} modulation,'' \emph{IEEE Wireless Commun. Lett.}, vol.~9, no.~2, pp.
  249--253, Feb. 2020.

\bibitem{surabhi2019peak}
G.~Surabhi, R.~M. Augustine, and A.~Chockalingam, ``Peak-to-average power ratio
  of {OTFS} modulation,'' \emph{IEEE Commun. Lett.}, vol.~23, no.~6, pp.
  999--1002, Jun. 2019.

\bibitem{raviteja2018practical}
P.~Raviteja, Y.~Hong, E.~Viterbo, and E.~Biglieri, ``Practical pulse-shaping
  waveforms for reduced-cyclic-prefix {OTFS},'' \emph{IEEE Trans. Veh. Tech.},
  vol.~68, no.~1, pp. 957--961, Jan. 2019.

\bibitem{surabhi2019low}
G.~Surabhi and A.~Chockalingam, ``Low-complexity linear equalization for {OTFS}
  modulation,'' \emph{IEEE Commun. Lett.}, vol.~24, no.~2, pp. 330--334, Feb.
  2020.

\bibitem{murali2018otfs}
K.~Murali and A.~Chockalingam, ``On {OTFS} modulation for high-{D}oppler fading
  channels,'' in \emph{Proc. Inform. Theory and Applications Workshop (ITA)},
  Feb. 2018, pp. 1--10.

\bibitem{raviteja2019embedded}
P.~Raviteja, K.~T. Phan, and Y.~Hong, ``Embedded pilot-aided channel estimation
  for {OTFS} in delay--{D}oppler channels,'' \emph{IEEE Trans. Veh. Tech.},
  vol.~68, no.~5, pp. 4906--4917, May 2019.

\bibitem{shen2019channel}
W.~Shen, L.~Dai, J.~An, P.~Fan, and R.~W. Heath, ``Channel estimation for
  orthogonal time frequency space ({OTFS}) massive {MIMO},'' \emph{IEEE Trans.
  Signal Process.}, vol.~67, no.~16, pp. 4204--4217, Aug. 2019.

\bibitem{matz2013time}
G.~Matz, H.~Bolcskei, and F.~Hlawatsch, ``Time-frequency foundations of
  communications: Concepts and tools,'' \emph{IEEE Signal Process. Mag.},
  vol.~30, no.~6, pp. 87--96, Nov. 2013.

\bibitem{farhang2017low}
A.~Farhang, A.~RezazadehReyhani, L.~E. Doyle, and B.~Farhang-Boroujeny, ``Low
  complexity modem structure for {OFDM}-based orthogonal time frequency space
  modulation,'' \emph{IEEE Wireless Commun. Lett.}, vol.~7, no.~3, pp.
  344--347, Jun. 2018.

\bibitem{rezazadehreyhani2018analysis}
A.~RezazadehReyhani, A.~Farhang, M.~Ji, R.~R. Chen, and B.~Farhang-Boroujeny,
  ``Analysis of discrete-time {MIMO} {OFDM}-based orthogonal time frequency
  space modulation,'' in \emph{Proc. IEEE Int. Conf. Commun. (ICC)}, May 2018,
  pp. 1--6.

\bibitem{li2019low}
L.~Li, Y.~Liang, P.~Fan, and Y.~Guan, ``Low complexity detection algorithms for
  {OTFS} under rapidly time-varying channel,'' in \emph{Proc. IEEE 89th Veh.
  Tech. Conf. (VTC-Spring)}, Apr. 2019, pp. 1--5.

\bibitem{long2019low}
F.~Long, K.~Niu, C.~Dong, and J.~Lin, ``Low complexity iterative {LMMSE}-{PIC}
  equalizer for {OTFS},'' in \emph{Proc. IEEE Int. Conf. Commun. (ICC)}, May
  2019, pp. 1--6.

\bibitem{raviteja2018interference}
P.~Raviteja, K.~T. Phan, Y.~Hong, and E.~Viterbo, ``Interference cancellation
  and iterative detection for orthogonal time frequency space modulation,''
  \emph{IEEE Trans. Wireless Commun.}, vol.~17, no.~10, pp. 6501--6515, Oct.
  2018.

\bibitem{tiwari2019low}
S.~Tiwari, S.~S. Das, and V.~Rangamgari, ``Low complexity {LMMSE} receiver for
  {OTFS},'' \emph{IEEE Commun. Lett.}, vol.~23, no.~12, pp. 2205--2209, Dec.
  2019.

\bibitem{ding2001blind}
Z.~Ding and Y.~Li, \emph{Blind equalization and identification}.\hskip 1em plus
  0.5em minus 0.4em\relax CRC press, 2001.

\bibitem{tepedelenlioglu2004low}
C.~Tepedelenlioglu and R.~Challagulla, ``Low-complexity multipath diversity
  through fractional sampling in {OFDM},'' \emph{IEEE Trans. Signal Process.},
  vol.~52, no.~11, pp. 3104--3116, Nov. 2004.

\bibitem{wu2010oversampled}
J.~Wu and Y.~R. Zheng, ``Oversampled orthogonal frequency division multiplexing
  in doubly selective fading channels,'' \emph{IEEE Trans. Commun.}, vol.~59,
  no.~3, pp. 815--822, Mar. 2011.

\bibitem{forney1972maximum}
G.~Forney, ``Maximum-likelihood sequence estimation of digital sequences in the
  presence of intersymbol interference,'' \emph{IEEE Trans. Inform. Theory},
  vol.~18, no.~3, pp. 363--378, May 1972.

\bibitem{brosamler1988almost}
G.~A. Brosamler, ``An almost everywhere central limit theorem,'' in \emph{Math.
  Proc. Cambridge Philos. Soc.}, vol. 104, no.~3.\hskip 1em plus 0.5em minus
  0.4em\relax Cambridge University Press, 1988, pp. 561--574.

\bibitem{som2011low}
P.~Som, T.~Datta, N.~Srinidhi, A.~Chockalingam, and B.~S. Rajan,
  ``Low-complexity detection in large-dimension {MIMO}-{ISI} channels using
  graphical models,'' \emph{IEEE J. Sel. Topics Signal Process.}, vol.~5,
  no.~8, pp. 1497--1511, Dec. 2011.

\bibitem{kschischang2001factor}
F.~R. Kschischang, B.~J. Frey, and H.-A. Loeliger, ``Factor graphs and the
  sum-product algorithm,'' \emph{IEEE Trans. Inform. Theory}, vol.~47, no.~2,
  pp. 498--519, Feb. 2001.

\bibitem{douillard1995}
C.~Douillard \emph{et~al.}, ``Iterative correction of intersymbol interference:
  Turbo-equalization,'' \emph{Eur. Trans. Telecommun.}, vol.~6, no.~5, pp.
  507--511, Sep. 1995.

\bibitem{el2013exit}
M.~El-Hajjar and L.~Hanzo, ``{EXIT} charts for system design and analysis,''
  \emph{IEEE Commun. Surveys Tuts.}, vol.~16, no.~1, pp. 127--153, 1st Quart.
  2014.

\bibitem{ten2001convergence}
S.~Ten~Brink, ``Convergence behavior of iteratively decoded parallel
  concatenated codes,'' \emph{IEEE Trans. Commun.}, vol.~49, no.~10, pp.
  1727--1737, Oct. 2001.

\bibitem{failli1989digital}
M.~Failli, \emph{Digital Land Mobile Radio Communications. COST 207}.\hskip 1em
  plus 0.5em minus 0.4em\relax European Communities, Luxembourg, 1989.

\bibitem{sun2016multi}
Y.~Sun, D.~W.~K. Ng, J.~Zhu, and R.~Schober, ``Multi-objective optimization for
  robust power efficient and secure full-duplex wireless communication
  systems,'' \emph{IEEE Trans. Wireless Commun.}, vol.~15, no.~8, pp.
  5511--5526, Aug. 2016.

\end{thebibliography}

%




\end{document}